\documentclass[reqno,11pt]{amsart}
\usepackage{color}
\IfFileExists{mymtpro2.sty}{%
  \usepackage[subscriptcorrection]{mymtpro2}
}{}

\usepackage{a4,amssymb}
\usepackage{graphicx}

%

\marginparwidth1.5cm
\marginparsep.5cm

%

%

\newtheorem{theorem}{Theorem}


\newtheorem{thm}{Theorem}[section]
\newtheorem{lemma}{Lemma}[section]

\newtheorem{pr}{Proposition}[section]
\theoremstyle{definition}

\newcommand{\bel}{\begin{equation} \label}
\newcommand{\ee}{\end{equation}}

\theoremstyle{remark}

\newtheorem{myremarks}[theorem]{Remarks}
\newtheorem{myexample}[theorem]{Example}

    {\end{nummer}\end{myremarks}}

\newcounter{numcount}
\newcommand{\labelnummer}{\mbox{\normalfont (\roman{numcount})}}%

\makeatletter

\newenvironment{nummer}%
  {\let\curlabelspeicher\@currentlabel%
    \begin{list}{\labelnummer}%
      {\usecounter{numcount}\leftmargin0pt%
        \topsep0.5ex\partopsep2ex\parsep0pt\itemsep0ex\@plus1\p@%
        \labelwidth2.5em\itemindent3.5em\labelsep1em%
      }%
    \let\saveitem\item%
    \def\item{\saveitem%
      \def\@currentlabel{{\upshape\curlabelspeicher}$\,$\labelnummer}}%
    \let\savelabel\label%
    \def\label##1{\savelabel{##1}%
      \@bsphack%
        \ifmmode\else%
          \protected@write\@auxout{}%
          {\string\newlabel{##1item}{{\labelnummer}{\thepage}}}%
        \fi%
      \@esphack%
    }%
  }{\end{list}}%

\renewcommand{\appendix}{\def\thesection{\textsc{Appendix}}}


 \let\leq\le
 \let\geq\ge

\DeclareMathOperator{\tr}{tr\kern1pt}



\newcommand{\N}{{\mathbb N}}

%

\makeatletter

\newif\ifper\pertrue
\def\per{.}

\def\bti{\@ifnextchar[\bbti\bbbti}
\def\bbti[#1]#2{#2, #1.}
\def\bbbti#1{#1.}

\def\z{\@ifnextchar[\zz\zzz}
\def\zz[#1]#2#3#4#5{\perfalse\emph{#2} \textbf{#3}, #4 (#5) [#1]}
\def\zzz#1#2#3#4{\emph{#1} \textbf{#2}, #3 (#4)\ifper\per\fi\pertrue}

\def\pub{\@ifstar\pubstar\pubnostar}
\def\pubnostar{\@ifnextchar[\@@pubnostar\@pubnostar}
\def\@@pubnostar[#1]#2#3#4{#2, #3, #4, #1\ifper\per\fi\pertrue}
\def\@pubnostar#1#2#3{#1, #2, #3\ifper\per\fi\pertrue}
\def\pubstar[#1]#2#3#4{\perfalse #2, #3, #4 [#1]\pertrue}

\makeatother

 \sloppy


\newcommand{\beq}{\begin{equation}}
\newcommand{\eeq}{\end{equation}}
\newcommand{\ba}{\begin{array}}
\newcommand{\ea}{\end{array}}
\newcommand{\bea}{\begin{eqnarray}}
\newcommand{\eea}{\end{eqnarray}}
\newcommand{\beas}{\begin{eqnarray*}}
\newcommand{\eeas}{\end{eqnarray*}}

{

\newcommand{\R}{\mathbb{R}}

\newcommand{\Z}{\mathbb{Z}}

\newcommand{\C}{\mathbb{C}}



\def\P{I\kern-.30em{P}}
\def\E{I\kern-.30em{E}}
\renewcommand{\E}{\mathbb{E}\mkern2mu}
\renewcommand{\P}{\mathbb{P}}

\begin{document}

\title[Eigenvalues statistics for non rank one perturbations]{Eigenvalue statistics for random Schr\"odinger operators with non rank one perturbations}

\author[P.\ D.\ Hislop]{Peter D.\ Hislop}
\address{Department of Mathematics,
    University of Kentucky,
    Lexington, Kentucky  40506-0027, USA}
\email{hislop@ms.uky.edu}

\author[M.\ Krishna]{M.\ Krishna}
\address{Institute for Mathematical Sciences,
IV Cross Road, CIT Campus,
 Taramani, Chennai 600 113,
 Tamil Nadu, India}
\email{krishna@imsc.res.in}

\thanks{PDH was partially supported by NSF through grant DMS-1103104.
MK was partially supported by IMSc Project 12-R\&D-IMS-5.01-0106, and thanks Minami, Dhriti and Anish for discussions on eigenvalue statistics}


\begin{abstract}
   We prove that certain natural random variables associated with the local eigenvalue statistics for generalized lattice Anderson models constructed with finite-rank perturbations are compound Poisson distributed. This distribution is characterized by the fact that the L\'evy measure is supported on at most a finite set determined by the rank. The proof relies on a Minami-type estimate for finite-rank perturbations. For Anderson-type continuum models on $\R^d$, we prove a similar result for certain natural random variables associated with the local eigenvalue statistics. We prove that the compound Poisson distribution
   associated with these random variables has a L\'evy measure whose support is at most the set of positive integers.
 \end{abstract}

\maketitle \thispagestyle{empty}


\tableofcontents

\vspace{.2in}

{\bf  AMS 2000 Mathematics Subject Classification:} 35J10, 81Q10,
35P20\\
{\bf  Keywords:}
random Schr\"odinger operators, eigenvalue statistics, dimer model, Minami estimate, compound Poisson distribution \\


\section{Statement of the problem and results}\label{sec:introduction}
\setcounter{equation}{0}

We consider random Schr\"odinger operators $H^\omega = L + V_\omega$ on the lattice Hilbert space $\ell^2 (\Z^d)$ (or, for matrix-valued potentials, on $\ell^2 (\Z^d) \otimes \C^{m_k}$),
or on the Hilbert space $L^2(\R^d)$, and prove that certain natural random variables
associated with the local eigenvalue statistics around an energy $E_0$, for energies $E_0$ in the region of complete localization, are distributed according to a compound Poisson distribution.
The operator $L$ is the discrete Laplacian on $\Z^d$ or the usual Laplacian on $\R^d$.
For lattice models, the random potential $V_\omega$ has the form
\beq\label{eq:potential1}
(V_\omega f)(j) = \sum_{i \in {\mathcal J}} \omega_i  (P_i f)(j),
\eeq
where $\{ P_i \}_{i \in {\mathcal{J}}}$ is a family of finite-rank projections with the same rank $m_k \geq 1$ and
so that $\sum_{i \in {\mathcal{J}}} P_i = I$. For the models on $\R^d$, the random potential is Anderson-type
\beq\label{eq:potential1-cont}
(V_\omega f)(x) = \sum_{i \in {\Z^d}} \omega_i  u(x-i) f(x),
\eeq
where $u \geq 0$ is a bounded single-site potential of compact support (see, for example, the description in \cite{CHK2}).

In either situation, the coefficients $\{ \omega_i \}$ are a family of independent, identically distributed (iid) random variables with a bounded density of compact support on a product probability space $\Omega$ with probability measure $\P$.

One example on the lattice is the polymer model for which $P_i = \chi_{\Lambda_k(i)}$ is the characteristic function on the cube of side length $k$ so the rank of $P_i$ is $k^d$ and the set $\mathcal{J}$ is chosen so that $\cup_{i \in \mathcal{J}} {\Lambda_k(i)} = \Z^d$.
Another example is a matrix-valued model for which $P_i$, $i \in \Z^d$, projects onto an $m_k$-dimensional subspace,
 and $\mathcal{J} = \Z^d$. The corresponding Schr\"odinger operator is
\beq\label{eq:model1}
H^\omega = L +  \sum_{i \in {\mathcal{J}}} \omega_i  P_i ,
\eeq
where $L$ is the discrete lattice Laplacian $\Delta$ on $\ell^2 (\Z^d)$,
or $\Delta \otimes I$ on $\ell^2(\Z^d) \otimes \C^{m_k}$, respectively.

The one-dimensional dimer model \cite{debievre-germinet1} consists of $d=1$, $k=2$, the set $\mathcal{J} = 2 \Z$, the even integers, and $\Lambda_2 (i)$ consists of the pair of lattice points $\{ i, i+1 \}$.
The rank of the projector $P_i$ is $m_k=2$.

Our results hold for the energy $E_0$ belonging to the domain of complete localization $\Sigma_{\rm CL}$.
The region of complete localization for $H^\omega$ is a closed subset of the almost sure spectrum characterized by dense pure point spectrum and exponentially decaying eigenfunctions. Most importantly, the Green's function at
energies in $\Sigma_{\rm CL}$ exhibits exponential decay. We refer the reader to Appendix A of \cite{CGK2} for a
concise description of $\Sigma_{\rm CL}$, and the to references in that paper for more details.

Our main result on the eigenvalue statistics, Theorem \ref{thm:main1} for the lattice case,
is as follows. Let an energy $E_0$ be in the regime of complete localization $\Sigma_{\rm CL}$.
We define the rescaled, local, eigenvalue point process associated with finite volume restrictions $H_\omega^\Lambda$ of $H^\omega$ as follows. Let $\{ E_j^\omega (\Lambda) \}$ be the eigenvalues of the local Hamiltonian $H_\omega^\Lambda$. For a bounded function $f$ of compact support, we define
\beq\label{eq:pt-process1}
\xi_\Lambda^\omega (f) := \sum_j f(|\Lambda| (E_j^\omega(\Lambda) - E_0)) .
\eeq
The limit points $\xi^\omega$ of this process as $|\Lambda| \rightarrow \infty$ exist and are
infinitely-divisible point processes on $\R$.
We prove that for any bounded Borel subset $I \subset \R$, the associated random variable $\xi^\omega(I)$
is distributed according to a compound Poisson distribution.
Furthermore, the L\'evy measure associated with the characteristic function for $\xi^\omega(I)$
has support in the finite set $\{ 1, 2, \ldots, m_k \}$. When the rank of $P_j$ is one, one recovers a Poisson distribution.

%

We are also able to study the distribution of the random variables $\xi^\omega(I)$ for random Schr\"odinger operators on $\R^d$ for energies in the regime of complete localization. In our main result for $\R^d$, Theorem \ref{thm:main-continum1}, we prove that these random variables also have a compound Poisson distribution. The proof is based on the Wegner estimate and localization.
Since we do not have a Minami-type estimate in this case, the most that we can prove is that the L\'evy measure associated with the characteristic function for $\xi^\omega(I)$ has support in the positive integers $\N$.

We give examples of random operators with compound Poisson and strictly
non-Poisson local statistics at the end of the paper. These examples make it
clear that spectral multiplicity plays a role in
local statistics, in addition to the nature of the spectrum (see also the paper of Naboko, Nichols, and Stolz \cite{naboko-nichols-stolz} and the discussion in section \ref{sec:non-poisson}). The Poisson local
eigenvalue statistics
in the usual Anderson models at high disorder seems
to come from exponential localization and simplicity of the spectrum. Indeed,
the simplicity of the spectrum for lattice models is proved using localization bounds and a Minami estimate \cite{klein-molchanov}.
We strongly suspect that the generalized Anderson-type models on $\Z^d$ with finite-rank perturbations considered
here do not have simple spectrum. This might also be the case for the random Schr\"odinger operator on $\R^d$ ($d>1$)
considered in section \ref{sec:continuous1}.

To our knowledge, these are the first results on eigenvalue statistics for general higher-rank perturbations on the lattice, and the first results on eigenvalue statistics for random Schr\"odinger operators on $\R^d$.

A compound Poisson process is a type of an infinitely-divisible, compound point process.
A typical example is constructed from a family $\{ X_j ~|~ j=0,1,2, \ldots \}$ of independent, identically distributed random variables and a Poisson process $N(t)$ with intensity $\lambda$ that is independent of the $X_j$. Then the process $Y(t) := \sum_{j=0}^{N(t)} X_j$ is a compound Poisson point process. The random variable $Y := Y(1)$, for example, has a distribution function given by
$$
F_Y(w) = \sum_{j=0}^\infty (f_X \ast f_X \ast \cdots \ast f_X )(w) \frac{ \lambda^j e^{-\lambda}}{j!} ,
$$
where $f_X$ is the common distribution function of the random variables $X_j$ and the convolution is taken $j$-times.
Alternately, the random variable $Y$ is distributed according to a compound Poisson distribution if its characteristic function
has the following form:
$$
\E \{ e^{it Y} \} = e^{ \int ( e^{itx} - 1) dM (x) } ,
$$
for some measure $M$ on $\R$ called the L\'evy measure.

There are two works related to our results. F.\ Nakano \cite{nakano} proved that the limit points of the eigenvalue point process for the continuum models studied here are infinitely-divisible point processes. Our result goes beyond this showing that certain natural random variables have compound Poisson distributions.
  In a recent paper, Tautenhahn and Veseli\'c \cite{tautenhahn-veselic} consider the $L+V_\omega$ on $\ell^2 (\Z^d)$, where $L$ is the discrete Laplacian, and $V_\omega(j) = \omega_j + \kappa \sum_{k \in \Z^d} \omega_k v(j-k)$. The function $v: \Z^d \rightarrow \R$ has compact support and $\kappa > 0$ is sufficiently small. In this sense, the potential $V_\omega$ is a small perturbation of the rank one case. Under some additional assumptions, they prove that the local eigenvalue point process is Poisson in the regime of complete localization.
  Recently, F.\ Klopp indicated to us that the methods of C.\ Shirley in
 \cite{shirley} (see also \cite{klopp}) can be used to prove Poisson statistics for the dimer
 model, described above, on the one-dimensional lattice.

 %


Local eigenvalue statistics in the localization regime for random Schr\"odinger operators on the lattice $\Z^d$ have been studied
by Molchanov \cite{molchanov} (Russian-school model on $\R$), by Minami \cite{minami1} and by Combes, Germinet and Klein \cite{CGK1}, ((lattice models, any dimension, localization regime), Killip and Nakano \cite{killip-nakano} (joint eigenvalue and center of localization distribution for lattice models, any dimension). In these case, the limiting eigenvalue point process obtained from $\xi_L$ is proved to be a Poisson point process. The Minami estimate is crucial in proving this result. The local eigenvalue spacing statistics for lattice models in the localization regime was proved by Germinet and Klopp \cite{germinet-klopp}.
Aizenman and Warzel \cite{aizenman-warzel} considered local eigenvalue statistics associated with the Anderson model
on regular rooted trees like the Bethe lattice. Although the random lattice Schr\"odinger operator
has both pure point and absolutely continuous spectrum, the local eigenvalue statistics are always Poissonian. This is explained by the fact that the canopy graph operator, that the authors show is the relevant operator for the study of local eigenvalue statistics, always has pure point spectrum. Thus the local eigenvalue statistics should be Poissonian despite the mixed spectral-type of the original random Schr\"odinger operator on the regular rooted tree.

Almost simultaneously with the present work, Dolai and Krishna \cite{dolai-krishna}
 proved Poisson statistics for the Anderson model on $\ell^2 (\Z^d)$ with $\alpha$-H\"older
continuous single-site probability distribution.
Their work shows that the spacing of
eigenvalues of the finite boxes $\Lambda$, which is like $|\Lambda|^{-1}$
for the case of an absolutely continuous distribution, changes to $|\Lambda|^{-\frac{1}{\alpha}}$ with the singularity of the
distribution, becoming smaller for more singular measures.


\subsection{Contents}

We first give the proof of Proposition \ref{proposition:minami1}, an extended Minami-type estimate for finite-rank perturbations,
in section \ref{sec:minami1}. We then prove that the random variables $\xi^\omega(I)$ obtained from a limiting point process are compound Poisson distributed by studying the characteristic functions in section \ref{sec:processes1}. We study continuum models in section \ref{sec:continuous1}. For these models on $\R^d$,
we show that the Wegner estimate and localization suffice to identify the distribution of $\xi^\omega (I)$
as a compound Poisson distribution and to conclude that the associated L\'evy measure has support in the set $\N$.
We conclude in section \ref{sec:non-poisson} with some examples of random operators for which the distribution of $\xi^\omega(I)$ is a nontrivial compound Poisson distribution. We present a technical result on the convergence of point measures in the appendix.


\section{Preliminaries for the finite-rank lattice models}\label{sec:prelims-lattice1}
\setcounter{equation}{0}

The proof of Theorem \ref{thm:main1} is based on the following observation. For $\Lambda \subset \Z^d$, a suitable cube (see section \ref{sec:minami1}), we denote by $H_\Lambda^\omega$ the restriction of $H^\omega$ to $\Lambda$ with self-adjoint boundary conditions.
Suppose $m_k$ denotes the constant rank of the projectors $P_i$ appearing in \eqref{eq:potential1}.
We consider the following set
\bea\label{eq:smallset}
S_{m_k}(I) = \{\omega:  Tr(E_I(H_\Lambda^\omega)) > m_k \} \subset \Omega.
\eea
We let $\chi_{S_{m_k}(I)}(\omega)$ be the characteristic function of this set and note that
$$
\chi_{S_{m_k}(I)}(\omega) < \frac{1}{m_k} \chi_{S_{m_k}(I)}(\omega) Tr(E_I(H_\Lambda^\omega)),
$$
and that
$$
\chi_{S_{m_k}(I)}(\omega) \leq  \chi_{S_{m_k}(I)}(\omega) ( Tr(E_I(H_\Lambda^\omega)) - m_k),
$$
since on the set $S_{m_k}(I)$, the quantity
$\left(  {\rm Tr} E_I (H_\Lambda^\omega) - m_k \right)$ is bigger than or equal to $1$.
It then follows that
\bea\label{eq:minami1}
\P \{ S_{m_k}(I) \}  & = & \E \{ \chi_{S_{m_k}(I)} \} \nonumber \\
 &< &  \frac{1}{m_k} \E \left(  {\rm Tr} E_I (H_\Lambda^\omega)
 \left(  {\rm Tr} E_I (H_\Lambda^\omega) - m_k \right) \chi_{S_{m_k}(I)} \right)
  \eea
for suitable boxes $\Lambda$.
We then estimate the right side of \eqref{eq:minami1} using spectral
averaging and the lattice argument of Combes-Germinet-Klein \cite{CGK1}.
In this context, we prove the following variant of the Minami estimate. When $m_k = 1$, this is the usual Minami estimate.

\begin{pr}\label{proposition:minami1}
Let $H_\Lambda^\omega$ be the restriction to the cube $\Lambda \subset \Z^d$ of the finite-rank
Anderson model described in \eqref{eq:model1} with uniform rank $m_k \geq 1$. Let $I = [a,b] \subset \R$ be a
finite energy interval. There exists a finite constant $C_M > 0$, depending only on $b,d$ and $\| \rho \|_\infty$, so that
\beq\label{eq:minami-limit1}
\P \left\{ S_{m_k}(I) \right\} \leq {C_M}  |I|^2 |\Lambda|^2 .
\eeq
\end{pr}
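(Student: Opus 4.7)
The plan is to bound the expectation appearing on the right of the displayed inequality \eqref{eq:minami1}. Writing $N_\omega := \mathrm{Tr}(E_I(H_\Lambda^\omega))$ and using the resolution of the identity $\sum_i P_i = I$ restricted to $\Lambda$, I would decompose
\[
N_\omega = \sum_i \mathrm{Tr}(E_I(H_\Lambda^\omega)\,P_i).
\]
Since each summand satisfies $0 \leq \mathrm{Tr}(E_I(H) P_i) \leq \mathrm{rank}(P_i) = m_k$, the diagonal contributions in
\[
N_\omega(N_\omega - m_k) = \sum_i \mathrm{Tr}(E_I P_i)\big(\mathrm{Tr}(E_I P_i) - m_k\big) + \sum_{i \neq j} \mathrm{Tr}(E_I P_i)\,\mathrm{Tr}(E_I P_j)
\]
are pointwise non-positive. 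Hence $N_\omega(N_\omega - m_k) \leq \sum_{i\neq j}\mathrm{Tr}(E_I P_i)\,\mathrm{Tr}(E_I P_j)$, and because the right-hand side is non-negative, the indicator $\chi_{S_{m_k}(I)}$ may be dropped to yield
\[
\P\{S_{m_k}(I)\} \leq \frac{1}{m_k}\sum_{i \neq j}\E\big[\mathrm{Tr}(E_I(H_\Lambda^\omega) P_i)\,\mathrm{Tr}(E_I(H_\Lambda^\omega) P_j)\big].
\]

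The heart of the proof is then a rank-$m_k$ generalization of the Combes-Germinet-Klein spectral averaging estimate for each pair. Fix $i \neq j$ and write $H_\Lambda^\omega = K + \omega_i P_i + \omega_j P_j$, where $K$ is independent of the two singled-out variables, and condition on the remaining randomness. The target bound is
\[
\int\!\!\int \rho(\omega_i)\rho(\omega_j)\,\mathrm{Tr}(E_I P_i)\,\mathrm{Tr}(E_I P_j)\,d\omega_i\,d\omega_j \leq C\,\|\rho\|_\infty^2\, m_k^2 |I|^2,
\]
which for $m_k=1$ is the standard CGK inequality. I would derive it by noting that, because $P_i P_j = 0$, the two-parameter perturbation affects only the $2m_k$-dimensional compression of the resolvent to $\Ran(P_i + P_j)$. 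A Krein/Feshbach identity reduces the trace product to a function of this matrix, and the operator-valued Birman-Solomyak-Combes-Hislop inequality $\int \rho(\omega)\,P\,E_I(K'+\omega P)\,P\,d\omega \leq \|\rho\|_\infty |I|\,P$, applied separately in $\omega_i$ and $\omega_j$ and combined with the determinantal structure of the compressed resolvent, extracts the two factors of $|I|$.

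To conclude, for either the matrix-valued or the polymer model the number of distinct projectors supported in $\Lambda$ is at most $|\Lambda|$, so the sum over ordered pairs $i\neq j$ has at most $|\Lambda|^2$ terms. Combining this count with the per-pair bound and the prefactor $1/m_k$ gives the claimed $\P\{S_{m_k}(I)\} \leq C_M |I|^2|\Lambda|^2$, with $C_M$ depending on $d$, $m_k$, and $\|\rho\|_\infty$. The principal obstacle lies in the second step: a naive two-stage Wegner estimate, using $\mathrm{Tr}(E_I P_i)\leq m_k$ to bound one factor before averaging the other, produces only $C m_k^2 |I|$ and loses a factor of $|I|$. Recovering the quadratic $|I|^2$ scaling demands the genuinely two-variable argument via the Feshbach compression to $\Ran(P_i+P_j)$, which is the direct rank-$m_k$ analog of Minami's original double-resolvent identity in the rank-one case.
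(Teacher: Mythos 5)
Your decomposition $N_\omega(N_\omega-m_k)=\sum_i a_i(a_i-m_k)+\sum_{i\neq j}a_ia_j$ with $a_i:=\mathrm{Tr}(E_I P_i)\in[0,m_k]$ is correct and cleanly reduces the problem to bounding $\sum_{i\neq j}\E[a_ia_j]$. But the rest of the argument is not actually carried out: you assert that a ``Krein/Feshbach identity'' on the $2m_k$-dimensional range of $P_i+P_j$, combined with the determinantal structure of the compressed resolvent, produces the $|I|^2$ scaling, and you yourself flag this as ``the principal obstacle.'' That is precisely the step that is missing. Averaging $\rho(\omega_i)\,a_i\,d\omega_i$ leaves a quantity $a_j$ that still depends on $\omega_i$, so the one-variable Birman--Solomyak inequality cannot simply be applied ``separately in $\omega_i$ and $\omega_j$''; and the Schur-complement/determinant identity that makes this work when $m_k=1$ does not transfer to $m_k>1$ in any obvious way. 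As written, the proposal identifies the difficulty but does not resolve it, so the proof is incomplete.

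The paper avoids this two-variable problem entirely. It does not split $N_\omega(N_\omega-m_k)$ into off-diagonal pairs; instead it keeps the product form, decomposes the first trace factor via the orthonormal basis $\{\varphi_m^{(j)}\}$ of $\Ran P_j$, and --- crucially --- invokes the rank-$m_k$ perturbation bound of section~\ref{subsec:rank-pert1}, namely $|\mathrm{Tr}\,E_I(H_\Lambda^\omega)-m_k|\leq \mathrm{Tr}\,E_I(H_\Lambda^{(\omega_j^\perp,\tau_j)})$, to replace the second factor by a quantity that is \emph{independent of $\omega_j$}. Only then is spectral averaging applied to the inner product in $\omega_j$ (yielding one factor of $|I|$), and the Wegner estimate applied to $\mathrm{Tr}\,E_I(H_\Lambda^{(\omega_j^\perp,\tau_j)})$ after integrating over the auxiliary parameter $\tau_j$ (yielding $|I|\,|\Lambda|$). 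Summing over $j$ gives the stated $|I|^2|\Lambda|^2$. This is the CGK-style argument adapted to finite rank, and it requires only one-variable spectral averaging plus the rank perturbation bound --- no two-variable Feshbach machinery. If you want to complete your version, you would need to actually prove the two-variable estimate for $m_k>1$, which is substantially harder than you suggest; otherwise, the rank-perturbation replacement is the tool you are missing.
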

We conclude this section with two simple results for our models that will be used in the sequel.

\subsection{The density of states}\label{subsec:dos1}

We compute the density of states (DOS) for the Hamiltonians \eqref{eq:potential1} where the rank of $P_j$ is $m_k$. The trace is on the Hilbert space $\ell^2 (\Lambda)$ or $\ell^2 (\Lambda) \times \C^{m_k}$, according to the model. For suitable cubes $\Lambda \subset \Z^d$, the indices $j$ range over $\mathcal{J} \cap \Lambda$ or $\Lambda$, respectively.
In either case, we write $1_\Lambda = \sum_{j \in \mathcal{J} \cap \Lambda } P_j$ so that
$$
{\rm Tr} E_I(H_\Lambda^\omega) = \sum_{j \in \mathcal{J} \cap \Lambda} {\rm Tr} E_I(H_\Lambda^\omega P_j).
$$
For example, for the case of \eqref{eq:model1}, the projector $P_j$ is the characteristic function $\chi_{\Lambda_k(j)}$
and we have
$$
{\rm Tr} E_I(H_\Lambda^\omega) = \sum_{j=1}^{|\Lambda|/ m_k} {\rm Tr} E_I(H_\Lambda^\omega \chi_{\Lambda_k(j)}).
$$
The following result follows from known results on the integrated density of states (IDS),
see, for example, \cite{CHK1,krishna-stollmann}.
Let us assume that the probability measure for $\omega_0$ has a bounded density with compact support.
Let $\Lambda_L (j)$, respectively, $\Lambda_L$, denote a cube of side length $L > 0$ centered at $j \in \Z^d$, respectively, the origin.
\begin{lemma}\label{lemma:dos1}
For an interval $I = [a,b]$, the limit
\beq
\lim_{L \rightarrow \infty} \frac{1}{| \Lambda_L|} {\rm Tr} \{ E_I (H_{\Lambda_L}^\omega) \},
\eeq
exists almost surely.
For the first case with finite-rank projector $P_j \otimes I$, the almost sure limit is
\beq\label{eq:dos-general1}
N(I) := \E \{ {\rm Tr} ( (P_0 \otimes I )E_I (H^\omega ) ) \} ,
\eeq
and when $P_j = \chi_{\Lambda_k (j)}$, the almost sure limit is
\beq\label{eq:dos-characteristic1}
N(I) := \frac{1}{m_k} \E \{ {\rm Tr}
( E_I (H^\omega) \chi_{\Lambda_k(0)}) \} .
\eeq
Furthermore, $N (E) := N((-\infty, E])$ is Lipschitz continuous. The corresponding DOS $n_{m_k}(E)$
exists as a locally bounded function.
\end{lemma}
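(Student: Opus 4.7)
The plan is to obtain the almost sure limit from Birkhoff's ergodic theorem and to derive Lipschitz continuity from a rank-$m_k$ spectral averaging estimate.

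First I would identify the ergodic translation group. For the matrix-valued model with $P_j$ projecting onto an $m_k$-dimensional subspace at each $j\in\Zd$, the natural $\Zd$-action by translation implements unitaries $U_a$ with $U_a H^\omega U_a^*=H^{\tau_a\omega}$, where $(\tau_a\omega)_j=\omega_{j-a}$, and this action is ergodic since the $\omega_j$ are iid. For the polymer/characteristic-function model with $\mathcal{J}$ an $m_k^{1/d}$-spaced sublattice (e.g.\ $k\Zd$), the relevant ergodic action is translation by $\mathcal{J}$, again ergodic by independence.

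Second, I would rewrite
\beq
\frac{1}{|\Lambda_L|}{\rm Tr}\,E_I(H_{\Lambda_L}^\omega) = \frac{1}{|\Lambda_L|}\sum_{j\in\mathcal{J}\cap\Lambda_L}{\rm Tr}(P_j E_I(H_{\Lambda_L}^\omega)),
\eeq
and use the standard DOS argument to replace $E_I(H_{\Lambda_L}^\omega)$ by $E_I(H^\omega)$ at the cost of a boundary term of size $o(|\Lambda_L|)$. This is controlled via Stone's formula or the Helffer--Sj\"ostrand calculus, since the difference of the spectral projections is trace-class with trace norm of order $|\partial\Lambda_L|$. Dividing by $|\Lambda_L|$ and applying Birkhoff's pointwise ergodic theorem to the stationary sequence $\{{\rm Tr}(P_j E_I(H^\omega))\}_{j\in\mathcal{J}}$ yields the almost sure limit $\E\{{\rm Tr}(P_0 E_I(H^\omega))\}$ in the matrix case, and $(1/m_k)\,\E\{{\rm Tr}(\chi_{\Lambda_k(0)}E_I(H^\omega))\}$ in the polymer case, where the factor $1/m_k$ records the density of the sublattice $\mathcal{J}$ inside $\Lambda_L$.

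Third, the Lipschitz continuity of $N(E)=N((-\infty,E])$ reduces to a rank-$m_k$ Wegner estimate. Fixing all couplings other than $\omega_j$, the operator $H_\Lambda^\omega$ depends linearly on $\omega_j$ through $P_j$; the spectral averaging lemma, applied to each of the $m_k$ directions in $\Ran P_j$ (equivalently, using the spectral shift function bound for rank-$m_k$ sign-definite perturbations), yields
\beq
\E\bigl\{{\rm Tr}(P_j E_I(H_\Lambda^\omega))\bigr\}\le m_k\|\rho\|_\infty |I|.
\eeq
Summing over $j\in\mathcal{J}\cap\Lambda$ and passing to the limit via Fatou gives $N(I)\le m_k\|\rho\|_\infty |I|$, so $N$ is Lipschitz. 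The Radon--Nikodym theorem then yields the density $n_{m_k}(E):=N'(E)$ a.e., bounded by $m_k\|\rho\|_\infty$, hence locally bounded.

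The main technical obstacle is the boundary-term estimate used to pass from $H_\Lambda^\omega$ to $H^\omega$ uniformly in $\omega$; geometric resolvent identities handle this, but care is required for projections $P_j$ that straddle $\partial\Lambda$. The remainder is a direct adaptation of the rank-one ergodic DOS arguments developed in \cite{CHK1, krishna-stollmann}.
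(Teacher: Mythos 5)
The paper offers no proof of this lemma beyond citing \cite{CHK1,krishna-stollmann}, and your argument---Birkhoff's ergodic theorem with respect to the $\mathcal{J}$-translation action, boundary-term control to replace $E_I(H_{\Lambda_L}^\omega)$ by $E_I(H^\omega)$, and a rank-$m_k$ spectral-averaging/Wegner bound for the Lipschitz continuity---is exactly the standard IDS argument those references contain, including the correct bookkeeping of the density factor $1/m_k$ for the sublattice $\mathcal{J}$. Your proposal is correct and takes the same approach the paper is implicitly relying on.
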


\subsection{The Wegner estimate}\label{subsec:wegner1}

The Wegner estimate is well-known for lattice models. The standard proof (see, for example, \cite[appendix]{CGK1})
may be applied to the finite-rank models considered here.

\begin{lemma}\label{lemma:wegner1}
For an interval $I = [a,b]$, there is a constant $c>0$, depending on $b$ and independent of $m_k$ so that
\beq
\E \{ {\rm Tr} \chi_{\Lambda_k} E_I (H^\omega)  \} \leq c |I| | \Lambda_k|  .
\eeq
\end{lemma}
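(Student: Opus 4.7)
The plan is to execute the standard Wegner template adapted to the finite-rank projectors $P_i$. Since the $P_i$ are mutually orthogonal with $\sum_i P_i = I$, the index set $S := \{i \in \mathcal{J} : P_i \chi_{\Lambda_k} \neq 0\}$ is finite, and $\chi_{\Lambda_k} \le Q := \sum_{i \in S} P_i$ as operators. By positivity of $E_I(H^\omega)$ and cyclicity of the trace,
\[
\mathrm{Tr}\big(\chi_{\Lambda_k} E_I(H^\omega)\big) \le \mathrm{Tr}\big(Q E_I(H^\omega)\big) = \sum_{i \in S} \mathrm{Tr}\big(P_i E_I(H^\omega)\big),
\]
each term on the right being finite since $P_i$ has rank $m_k$.

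Second, for each $i \in S$ I would condition on $\omega_\perp := (\omega_j)_{j \neq i}$ and write $H^\omega = A_i + \omega_i P_i$, with $A_i$ independent of $\omega_i$. The analytic core is a finite-rank spectral averaging bound
\[
\int_\R \mathrm{Tr}\big(P_i E_I(A_i + t P_i)\big)\, \rho(t)\, dt \;\le\; m_k\, \|\rho\|_\infty\, |I|,
\]
which, after integrating over the remaining variables $\omega_\perp$, yields $\E\{\mathrm{Tr}(P_i E_I(H^\omega))\} \le m_k \|\rho\|_\infty |I|$.

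Summing over $i \in S$, and using that the supports of $\{P_i\}_{i \in S}$ are pairwise disjoint, each of cardinality $m_k$, and together cover $\Lambda_k$ with at most a bounded-geometry overshoot, I would conclude $|S|\, m_k \le C_d\,|\Lambda_k|$. Combining the last two bounds gives
\[
\E\{\mathrm{Tr}(\chi_{\Lambda_k} E_I(H^\omega))\} \;\le\; |S|\, m_k\, \|\rho\|_\infty\, |I| \;\le\; C_d\, \|\rho\|_\infty\, |I|\, |\Lambda_k|,
\]
which is the claim with $c = C_d \|\rho\|_\infty$. The dependence of $c$ on the upper endpoint $b$ enters only through the fact that $I$ must lie inside the (compact) almost-sure spectrum of $H^\omega$, so this does not interfere with the $m_k$-independence of the constant.

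The main obstacle is the rank-$m_k$ spectral averaging step. For rank-one perturbations this is the classical Kotani--Simon/Combes--Hislop inequality. Its extension to a rank-$m_k$ orthogonal projection relies on the identity $P_i \phi = \phi$ for every $\phi \in \Ran(P_i)$: picking an orthonormal basis $\{e_\alpha\}_{\alpha=1}^{m_k}$ of $\Ran(P_i)$, one applies a Krein-type resolvent identity to the vectors $e_\alpha$ and obtains $m_k$ scalar bounds of the form $\int \langle e_\alpha, E_I(A_i + tP_i) e_\alpha\rangle\, \rho(t)\, dt \le \|\rho\|_\infty\,|I|$; summing these produces exactly the factor $m_k$, with no hidden multiplicative dependence on the rank.
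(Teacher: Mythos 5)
Your argument follows the standard Wegner template and matches what the paper points to (it cites the appendix of \cite{CGK1} and does not spell out the proof), so the overall route is fine. The three steps you identify are the right ones: domination of $\chi_{\Lambda_k}$ by a finite sum of the $P_i$, finite-rank spectral averaging with an explicit factor $m_k$, and a covering estimate $|S|\,m_k \le C_d\,|\Lambda_k|$ that cancels the $m_k$.

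Two small points are worth flagging. First, the operator inequality $\chi_{\Lambda_k} \le Q := \sum_{i\in S} P_i$ does \emph{not} follow from orthogonality and completeness of the $P_i$ alone: if some $P_i$ mixed sites inside and outside $\Lambda_k$, the inequality could fail (take $P$ the projection onto $\mathrm{span}\{e_1+e_2\}$ with $e_1\in\Lambda_k$, $e_2\notin\Lambda_k$, and test with $e_1-e_2$). What actually makes the inequality true here is that the $P_i$ in the paper's models are block-diagonal with respect to the lattice decomposition (either characteristic functions $\chi_{\Lambda_k(i)}$ or $|\delta_i\rangle\langle\delta_i|\otimes I_{m_k}$), so $\chi_{\Lambda_k}$ and $Q$ commute and the inequality is checked entrywise. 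You should state this locality explicitly rather than attribute it to $\sum_i P_i = I$. Second, the rank-$m_k$ spectral averaging you invoke, $\int \langle e_\alpha, E_I(A_i + tP_i)e_\alpha\rangle\,\rho(t)\,dt \le \|\rho\|_\infty |I|$ for $e_\alpha\in\Ran(P_i)$, is indeed correct and is exactly the estimate the paper records in its section on spectral averaging; it reduces via Krein's formula on $\Ran(P_i)$ to a Herglotz-function argument and gives the clean factor $m_k$ upon summing over a basis. With those two clarifications your proof is a complete and correct implementation of the lemma.
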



\section{The Minami estimate for finite-rank lattice models}\label{sec:minami1}
\setcounter{equation}{0}
We extend the usual Minami estimate for rank one perturbations to finite-rank perturbations.


\subsection{Rank $m_k$ perturbation bound}\label{subsec:rank-pert1}

Let $\Lambda \subset \Z^d$ be a box and suppose $\Lambda_k (j) \subset \Lambda$. We write $\omega = (\omega_j^\perp, \omega_j)$ to denote the
random variables in $\Lambda$ decomposed relative to $j$. We then have a standard result that a perturbation of rank $m_k$ can change at most $m_k$ eigenvalues of $H_\Lambda^\omega$. For any $\tau_j > 0$, we have
\beq\label{eq:rank-pert1}
| {\rm Tr} E_I (H_\Lambda^\omega) - m_k | \leq {\rm Tr} E_I (H_\Lambda^{(\omega_j^\perp, 0)} + \tau_j P_j)
= {\rm Tr} E_I (H_\Lambda^{(\omega_j^\perp, \tau_j)}) .
\eeq
See, for example, section 4.3 of \cite{CHK2}.

\subsection{Spectral averaging}\label{subsec:sp-ave1}

Let $0 \leq m < M < \infty$. For a finite rank projector $P_j$, we have the spectral averaging result:
$$
(M-m)^{-1} ~\int_m^M ~d \omega_j \langle P_j  \varphi, E_I (H_\Lambda^{(\omega_j^\perp, \omega_j)}) P_j \varphi \rangle
\leq C_1 \| \varphi \|_{\ell^2 (\Z^d)} |I|,
$$
for any $\varphi \in \ell^2 (\Z^d)$. The constant $C_1 > 0$ is independent of $|\Lambda|$.


\subsection{Proof of the extended Minami estimate}\label{subsec:minami-pv1}

We take $\tau_j > M$ and let $\Lambda \subset \Z^d$ be a cube. For each $j \in \mathcal{I}$, let $\{ \varphi_m^{(j)} \}_{m=1}^{m_k}$
 be an orthonormal basis for the range of $P_j$ containing an integer number of cubes $\Lambda_k(j)$ of size $m_k$.

\bea\label{eq:pv-minami1}
 \lefteqn{ \E \{ | {\rm Tr} E_I (H_\Lambda^\omega) (  {\rm Tr} E_I (H_\Lambda^\omega)  - m_k )| \} }&& \nonumber \\
  &=& \sum_{j \in \mathcal{J}} \sum_{m=1}^{m_k}  \E \left\{ \langle \varphi_m^{(j)}, E_I (H_\Lambda^{(\omega_j^\perp, \omega_j)} ) \varphi_m^{(j)}
   \rangle  (  |{\rm Tr} E_I (H_\Lambda^\omega ) - m_k | ) \right\} \nonumber \\
  & \leq & \sum_{j \in \mathcal{J}} \E \left\{ \langle \varphi_m^{(j)}, E_I(H_\Lambda^{(\omega_j^\perp, \omega_j)}) \varphi_m^{(j)} \rangle ~{\rm Tr} E_I (H_\Lambda^{(\omega_j^\perp, \tau_j)} ) \right\} \nonumber \\
   & = & (m_k |I| ~\alpha^{-1}) ~ \sum_{j \in \mathcal{J}} ~\E_{\omega_j^\perp} \left[ \int_M^{M+\alpha} ~d \tau_j ~ \left\{
    {\rm Tr} E_I (H_\Lambda^{(\omega_j^\perp, \tau_j)} ) \right\} \right] \nonumber \\
    & \leq & C_M m_k |I|^2 {|\Lambda|^{2}},
    \eea
    where $C_M > 0$ is a constant depending on $C_1, b, d$, and the density $\rho$.
 To pass from the second to the third line, we take the expectation of the  inner product with respect to $\omega_j$ since the trace does not depend on this random variable. We then integrate with respect to $(\tau_j, \omega_j^\perp)$ which is a new expectation and use the Wegner estimate to bound the trace.


\section{Eigenvalue point processes for finite-rank lattice models}\label{sec:processes1}
\setcounter{equation}{0}

In this section, we prove that the random variable $\xi^\omega (I)$ is distributed according to a compound Poisson distribution.
Recall that this means that the characteristic function has the form
\beq\label{eq:compound-defn1}
\E \left\{ e^{i t \xi^\omega (I) } \right\} = e^{  \int ( e^{itx} - 1) ~dM(x)} .
\eeq
In our case, we will show that the L\'evy measure $M$ is a point measure with support in the finite set $\{ 1 , \ldots , m_k \}$
and with weights as described in Theorem \ref{thm:main1}.

We need some local operators. For any integer $L>0$, so that $m_k$ divides $L$, we take $\Lambda_L$ to be a cube of side length $2L+1$, so that $\Lambda_L := \{ n \in \Z^d ~|~ |n| \leq L \}$.
Let $\beta_L : = (2L+1)^d$. We choose another integer $0 < \ell < L$ so that $N_L$ cubes of side length $2 \ell + 1$ form a non-overlapping cover of $\Lambda_L$ with centers $\{ n_p ~|~ p = 1, \ldots, N_L \}$:
$$
\Lambda_L = \cup_{p=1}^{N_L} \Lambda_\ell (n_p), ~~{\rm with} ~~N_L = [(2L+1)(2 \ell+1)^{-1}]^d.
$$

We define the local Hamiltonian $H_L^\omega := \chi_{\Lambda_L} H^\omega \chi_L$,
the restriction of $H^\omega$ to the cube $\Lambda_L$. For each $p = 1, \ldots, N_L$, we likewise define local operator $H_{p,\ell}^\omega := \chi_{\Lambda_\ell(n_p)} H^\omega \chi_{\Lambda_\ell(n_p)}$. Let $E \in \Sigma_{\rm{CL}}$ and set $I = [a,b] \subset \Sigma_{\rm CL}$ be a finite interval. Let $\tilde{I} := \beta_L^{-1} I + E$ be the scaled energy interval centered at $E$ where $\beta_L = L^d$.
We define two eigenvalue point processes associated with each local operator and the interval ${I}$:
\beq\label{eq:process1}
\xi_{L}^\omega ({I}) := {\rm Tr} ( \chi_{\Lambda_L} E_{\beta_L(H^\omega - E)} (I) )
= {\rm Tr} ( \chi_{\Lambda_L} E_{H_L^\omega}(\tilde{I}) ),
\eeq
and
\beq\label{eq:process2}
\eta_{\ell,p}^\omega ({I}) := {\rm Tr} ( \chi_{\Lambda_\ell(n_p)} E_{\beta_L(H_{p,\ell}^\omega - E)} (I) )
= {\rm Tr} ( \chi_{\Lambda_\ell (n_p)} E_{H_{p,\ell}^\omega} (\tilde{I}) ).
\eeq

In order to state the main result, we need the following limiting behavior of the processes $\xi_L({I})$.
The proof is given in section \ref{subsubsec:proof-main1}.


\begin{lemma}\label{lemma:fixed-prob1}
Let $I$ be a bounded Borel set and let $E$ be such that
the density of states $n_{m_k}(E) \neq 0$, then we have
\beq\label{eq:fixed-prob2}
\E \{ \xi_L^\omega ({I}) \} \leq c |I| ,
\eeq
and
\beq\label{eq:fixed-prob0}
\P \{ \xi_L^\omega ({I}) = j \} \leq \frac{c |I|}{j}, ~~j = 1,2, \ldots .
\eeq
Furthermore, we have
\beq\label{eq:fixed-prob01}
\lim_{L\rightarrow \infty} {\mathbb E}(\xi_L^\omega({I})) = n_{m_k}(E) |I|,
\eeq
and for any $j \geq m_k+1$, we have
\beq\label{eq:fixed-prob1}
\lim_{L \rightarrow \infty} \P \{ \xi_L^\omega ({I}) = j \} = 0 .
\eeq
Finally, there exists a sequence $L_n \rightarrow \infty$ so that
for any $j \in \{ 1, 2, \ldots, m_k \}$ and any compact $I \subset \R$,
the limits
\beq\label{eq:fixed-prob3}
\lim_{n \rightarrow \infty} \P \{ \xi_{L_n}^\omega ({I}) = j \} := p_j(I) ,
\eeq
exist. The weights $p_j(I)$ satisfy $p_j(I) \leq n_{m_k}(E) |I| j^{-1}$,
and at least one of the $p_j(I)$ is non-zero.
\end{lemma}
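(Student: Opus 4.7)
The plan is to prove the six assertions in sequence, using the Wegner estimate (Lemma \ref{lemma:wegner1}), the extended Minami estimate (Proposition \ref{proposition:minami1}), Lipschitz continuity of the IDS (Lemma \ref{lemma:dos1}), and a two-scale box decomposition combined with localization in $\Sigma_{\rm CL}$. The content is concentrated in (\ref{eq:fixed-prob1}); the surrounding statements are soft consequences.

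For (\ref{eq:fixed-prob2}), apply the Wegner estimate to $H_L^\omega$ on $\Lambda_L$ with the rescaled interval $\tilde I = \beta_L^{-1} I + E$; since $|\tilde I|\,|\Lambda_L| = |I|$, one obtains $\E\{\xi_L^\omega(I)\} \leq c|I|$. Bound (\ref{eq:fixed-prob0}) follows pointwise from $j\,\mathbf{1}_{\xi_L^\omega(I) = j} \leq \xi_L^\omega(I)$ on taking expectations. For (\ref{eq:fixed-prob01}), combine Lipschitz continuity of $N$ with the standard convergence $|\Lambda_L|^{-1}\E\{{\rm Tr}\,E_{J}(H_L^\omega)\} \to N(J)$; specialising to $J = \tilde I$ and using $N(\tilde I)/|\tilde I| \to n_{m_k}(E)$ as $|\tilde I| \to 0$ yields $\E\{\xi_L^\omega(I)\} \to n_{m_k}(E)|I|$.

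The key step is (\ref{eq:fixed-prob1}). Choose $\ell = L^\alpha$ with $\alpha \in (0,1)$ and tile $\Lambda_L$ by $N_L = (L/\ell)^d$ non-overlapping subcubes $\Lambda_\ell(n_p)$. Applying Proposition \ref{proposition:minami1} to each subcube and taking a union bound gives
\[
\P\{\exists\, p : \eta_{\ell,p}^\omega(I) > m_k\} \;\leq\; N_L\, C_M |\tilde I|^2 |\Lambda_\ell|^2 \;=\; C_M |I|^2 (\ell/L)^d \;\longrightarrow\; 0.
\]
So with probability tending to one, no subcube carries more than $m_k$ eigenvalues of $H_{p,\ell}^\omega$ in $\tilde I$. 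I then plan to compare $\xi_L^\omega(I)$ with the decoupled count coming from the $\eta_{\ell,p}^\omega(I)$ by exploiting the exponential decay of the Green's function and of the eigenfunctions on $\Sigma_{\rm CL}$ (geometric resolvent identity and chopping on annular shells, \`a la Minami and Combes-Germinet-Klein): the event $\{\xi_L^\omega(I) > m_k\}$ is then confined either to the low-probability Minami event above or to the (also low-probability) event that boundary coupling between adjacent subcubes spuriously merges two independent clusters. I expect this decoupling to be the main obstacle --- it is precisely where localisation, rather than just the Wegner and Minami estimates, enters the argument, and making the boundary contribution vanish requires a careful balance between $\ell$ and $L$ together with standard annular-shell bookkeeping.

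For (\ref{eq:fixed-prob3}), extract by a Cantor diagonal argument a subsequence $L_n$ along which $\P\{\xi_{L_n}^\omega(I) = j\}$ converges for each $j \in \{1,\ldots,m_k\}$ and each $I$ in a countable dense family of compact intervals. The bound $p_j(I) \leq n_{m_k}(E)|I|\,j^{-1}$ is then obtained by passing to the limit in (\ref{eq:fixed-prob0}) and using (\ref{eq:fixed-prob01}). Non-vanishing of at least one $p_j$ follows by splitting $\E\{\xi_{L_n}^\omega(I)\} = \sum_{j=1}^{m_k} j\,\P\{\xi_{L_n}^\omega(I) = j\} + \E\{\xi_{L_n}^\omega(I)\,\mathbf{1}_{\xi_{L_n}^\omega(I) > m_k}\}$ and bounding the second term via Cauchy--Schwarz, a second-moment estimate derived from Proposition \ref{proposition:minami1}, and (\ref{eq:fixed-prob1}); this yields $\sum_{j=1}^{m_k} j\,p_j(I) = n_{m_k}(E)|I| > 0$, so not all $p_j(I)$ can vanish.
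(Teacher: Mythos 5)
Your bounds for \eqref{eq:fixed-prob2} and \eqref{eq:fixed-prob0} match the paper's (Wegner plus Markov), and your diagonal extraction for the existence of the limits in \eqref{eq:fixed-prob3} is what the paper does via the uniform bound \eqref{eq:fixed-prob0}. For \eqref{eq:fixed-prob01} you appeal to Lipschitz continuity of the IDS and the finite-volume convergence of the eigenvalue count; the paper instead refers to Minami's derivation of eq.\ (2.50) together with the replacement $\xi_L^\omega \approx \sum_p \eta_{\ell,p}^\omega$ (Proposition \ref{theorem:relation1}). Your version glosses over the fact that this is a double limit (the window $\tilde I$ shrinks as $L\to\infty$), which is exactly the point Minami's argument addresses through decoupling, so you are implicitly doing the same thing.

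The genuine gap is in your argument for \eqref{eq:fixed-prob1}. You assert that $\{\xi_L^\omega(I)>m_k\}$ is confined to the Minami event (some single subcube carrying more than $m_k$ eigenvalues) together with a boundary-coupling event, both vanishing. That is not the dominant contribution: even with perfect decoupling, $\sum_{p}\eta_{\ell,p}^\omega(I)>m_k$ occurs whenever two or more far-apart, \emph{independent} subcubes each contribute an eigenvalue, and $\sum_p \P\{\eta_{\ell,p}^\omega(I)\geq 1\}$ stays of order $|I|$ as $L\to\infty$ rather than tending to zero. Consequently $\P\{\xi_L^\omega(I)>m_k\}$ does not vanish; in the rank-one case $\xi^\omega(I)$ is Poisson of parameter $n_{m_k}(E)|I|>0$ and has strictly positive mass at every $j\geq 2$. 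The paper's own proof cites Lemma \ref{lemma:asympt-support1} for \eqref{eq:fixed-prob1}, but that lemma controls $\sum_p\P\{\eta_{\ell,p}^\omega(I)>m_k\}$, not $\P\{\xi_L^\omega(I)>m_k\}$, and the deduction does not go through as written. What the rest of the paper's argument actually uses, and what is provable, is $\lim_L \sum_{p=1}^{N_L}\P\{\eta_{\ell,p}^\omega(I)=j\}=0$ for $j>m_k$, with the L\'evy weights defined by $p_j(I)=\lim_n\sum_p\P\{\eta_{\ell,p}^\omega(I)=j\}$ rather than $\lim_n\P\{\xi_{L_n}^\omega(I)=j\}$; see the display after \eqref{eq:indep-array5} in the proof of Proposition \ref{prop:limit-indep-case1}. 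With that corrected interpretation the identity $\sum_{j=1}^{m_k} j\,p_j(I)=n_{m_k}(E)|I|$ you prove at the end is right; with $p_j(I)=\lim\P\{\xi_L=j\}$ one only gets an inequality, since $\sum_{j\leq m_k} j\,\P\{\xi^\omega(I)=j\}<\E\{\xi^\omega(I)\}$ whenever $\xi^\omega(I)$ charges $\{m_k+1,m_k+2,\dots\}$. So the Cauchy--Schwarz tail bound you propose cannot close the argument: the term you want to kill does not go to zero.
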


Our main theorem for the lattice case is the following result.

\begin{thm}\label{thm:main1}
Let $H^\omega$ be a generalized Anderson model on $\Z^d$ with projections $P_j$ having uniform
rank $m_k \geq 1$ as in \eqref{eq:potential1} and \eqref{eq:model1}. Let $E \in \Sigma_{\rm{CL}}$
such that the density of states $n_k(E) >0$.
Let $\xi^\omega$ be a limit point of the eigenvalue point process
 $\xi_L^\omega$ defined in \eqref{eq:process1}.
For each bounded interval $I \subset \R$, the random variables $\xi^\omega(I)$
are compound Poisson random variables
with characteristic function
\beq\label{eq:conv-process11}
\E \left\{ e^{i t \xi^\omega (I)} \right\} = e^{\sum_{j=1}^{m_k} ( e^{itj} - 1) p_j(I)} ,
\eeq
where $p_j(I)$ is defined in \eqref{eq:fixed-prob3}.
Hence, the random variable $\xi^\omega (I)$ has a
compound Poisson distribution with associated L\'evy measure supported on the set $\{ 1, \ldots, m_k \}$ with weights $p_j(I)$.
\end{thm}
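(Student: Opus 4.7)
The plan is to compute the limiting characteristic function $\lim_{n\to\infty}\E\{e^{it\xi_{L_n}^\omega(I)}\}$ along the subsequence provided by Lemma~\ref{lemma:fixed-prob1} and to recognise it as the compound Poisson form \eqref{eq:conv-process11}. The core idea is to split $\Lambda_L$ into a disjoint cover by intermediate-scale sub-cubes and to exploit (i) localization in $\Sigma_{\rm CL}$ to decouple the sub-cubes into an almost independent family, (ii) the extended Minami bound to truncate each sub-cube contribution at $m_k$ atoms, and (iii) the Wegner estimate to make a logarithmic expansion rigorous.

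First I would choose an intermediate length scale $\ell=\ell(L)$ with $\ell\to\infty$ and $\ell/L\to 0$, and use the disjoint cover $\Lambda_L=\bigcup_{p=1}^{N_L}\Lambda_\ell(n_p)$. Since $E\in\Sigma_{\rm CL}$, exponential decay of the Green's function combined with a geometric resolvent type decoupling argument (as in \cite{CGK1}) yields
$$
\E\{e^{it\xi_L^\omega(I)}\} - \E\bigl\{e^{it\sum_{p=1}^{N_L}\eta_{\ell,p}^\omega(I)}\bigr\}\longrightarrow 0.
$$
Because the operators $H^\omega_{p,\ell}$ act on disjoint cubes and depend on disjoint families of random variables, the $\eta_{\ell,p}^\omega(I)$ are mutually independent, and the right-hand expectation factorises as $\prod_{p=1}^{N_L}\E\{e^{it\eta_{\ell,p}^\omega(I)}\}$.

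For each $p$, set $q_{j,p}(L,I):=\P\{\eta_{\ell,p}^\omega(I)=j\}$ and expand
$$
\E\{e^{it\eta_{\ell,p}^\omega(I)}\}=1+\sum_{j=1}^{m_k}(e^{itj}-1)\,q_{j,p}(L,I)+R_p,
$$
with the remainder $R_p$ collecting the contribution of the event $\{\eta_{\ell,p}^\omega(I)\geq m_k+1\}$. Applied at the scaled interval $\tilde I=\beta_L^{-1}I+E$, Proposition~\ref{proposition:minami1} gives $|R_p|\leq 2C_M|I|^2(\ell/L)^{2d}$, so $\sum_p|R_p|\leq 2C_M(\ell/L)^d|I|^2\to 0$. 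The Wegner estimate (Lemma~\ref{lemma:wegner1}) on $\tilde I$ gives $q_{j,p}(L,I)\leq c|I|(\ell/L)^d$, making each summand $z_p:=\sum_{j=1}^{m_k}(e^{itj}-1)q_{j,p}+R_p$ uniformly $o(1)$. Using $\log(1+z)=z+O(z^2)$ and controlling the quadratic error by $\sum_p|z_p|^2\leq(\max_p|z_p|)\sum_p|z_p|\to 0$, I obtain
$$
\log\E\{e^{it\xi_L^\omega(I)}\}=\sum_{j=1}^{m_k}(e^{itj}-1)\sum_{p=1}^{N_L}q_{j,p}(L,I)+o(1).
$$

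It remains to identify $\lim_n\sum_p q_{j,p}(L_n,I)$ with the weights $p_j(I)$ from \eqref{eq:fixed-prob3}. For this, I would decompose the event $\{\xi_L^\omega(I)=j\}$ into the configuration ``exactly one sub-cube $\Lambda_\ell(n_p)$ is excited and carries $j$ eigenvalues in $\tilde I$, while all others carry none'' and its complement, and show that the complement contributes negligibly to $\P\{\xi_L^\omega(I)=j\}$ by a second-moment / Chebyshev argument based on independence of the $\eta_{\ell,p}^\omega(I)$ and the Minami bound on each sub-cube. Exponentiating then delivers \eqref{eq:conv-process11} with L\'evy measure supported on $\{1,\ldots,m_k\}$. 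The main obstacle will be precisely this last identification step: because the total rate $\sum_p\P\{\eta_{\ell,p}^\omega(I)\geq 1\}$ is merely bounded (by Wegner) and does not vanish, multi-excited-sub-cube events cannot be dismissed by a union bound and must be handled through the full strength of the Minami estimate in conjunction with independence of the sub-cube contributions.
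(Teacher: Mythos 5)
Your strategy through the computation of the characteristic function matches the paper's proof essentially step for step: the intermediate scale $\ell$ with a disjoint cover of $\Lambda_L$, the decoupling of $\xi_L^\omega$ from $\sum_p\eta_{\ell,p}^\omega$ via localization (the paper's Proposition~\ref{theorem:relation1}), the factorization of the characteristic function by independence, the logarithmic expansion controlled by the Wegner estimate, and the truncation of each sub-cube contribution at $m_k$ atoms via the extended Minami estimate (the paper's Proposition~\ref{prop:limit-indep-case1}). So up to the point where you have $\log\E\{e^{it\zeta_L^\omega(I)}\}=\sum_{j=1}^{m_k}(e^{itj}-1)\sum_p q_{j,p}(L,I)+o(1)$, you are following the same route.

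The problem is your final identification step. You propose to show that $\lim_n\sum_p q_{j,p}(L_n,I)$ equals $p_j(I)=\lim_n\P\{\xi_{L_n}^\omega(I)=j\}$ by decomposing $\{\xi_L^\omega(I)=j\}$ into the ``exactly one excited sub-cube with $j$ eigenvalues'' configuration plus a negligible complement. This would not work, and the reason is structural, not a matter of sharpening the estimates: $\lim_n\P\{\xi_{L_n}^\omega(I)=j\}$ is the \emph{mass} that the limiting compound Poisson distribution assigns to the integer $j$, whereas $\lim_n\sum_p q_{j,p}(L_n,I)=:\lambda_j$ is the weight of the L\'evy measure. These are genuinely different numbers. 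Already in the rank-one (Poisson) case with intensity $\lambda_1>0$ one has $\P\{\xi^\omega(I)=1\}=\lambda_1 e^{-\lambda_1}\neq\lambda_1$. Concretely, the ``exactly one excited sub-cube'' configuration has probability $\sum_p\P\{\eta_{\ell,p}^\omega(\tilde I)=j\}\prod_{q\neq p}\P\{\eta_{\ell,q}^\omega(\tilde I)=0\}$, and since the total excitation rate $\sum_q\P\{\eta_{\ell,q}^\omega(\tilde I)\ge 1\}$ stays bounded away from $0$ (exactly as you observe), the product factor converges to $e^{-\Lambda}$ with $\Lambda=\sum_j\lambda_j>0$, not to $1$; and the multi-excited configurations do contribute at leading order when $j\ge 2$. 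No amount of Minami estimate closes this gap, because the two quantities are simply not asymptotically equal.

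The fix is that you do not need this identification at all: once you have the characteristic function of $\zeta_L^\omega(I)$ in the form $\exp\bigl(\sum_{j=1}^{m_k}(e^{itj}-1)\sum_p q_{j,p}\bigr)$ up to $o(1)$, together with the existence of the limits $\lambda_j=\lim_n\sum_p q_{j,p}(L_n,I)$ along a subsequence (which follows from the Wegner bound $\sum_p q_{j,p}\le c|I|/j$ and Bolzano--Weierstrass) and the fact that $\xi_L^\omega(I)$ and $\zeta_L^\omega(I)$ have the same distributional limits, you already conclude that $\xi^\omega(I)$ is compound Poisson with L\'evy measure $\sum_{j=1}^{m_k}\lambda_j\delta_j$. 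The weights $p_j(I)$ appearing in the theorem should be read as these L\'evy weights $\lambda_j$, not as the limiting point masses $\P\{\xi^\omega(I)=j\}$; the paper's own route via the convergence-of-measures Proposition~\ref{prop:pt-measures1} in the appendix likewise only justifies existence of the subsequential limits and must be read with this in mind.
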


We remark that in the above theorem we cannot show that $p_j(I)$ is not zero for any $j \neq 1$, although we suspect this to be true in many cases.
It is not hard to see, using the fact that a Poisson random variable has the same mean and variance,
that if $p_j(I) \neq 0$ for some $j \neq 1$, then the random variable is \textbf{not} Poisson.
In section \ref{sec:non-poisson}, we provide examples of random operators for which this random variable $\xi^\omega (I)$ is
compound Poisson distributed since for these examples we show that $p_j(I) \neq 0$ for some $j > 1$.

The proof of this theorem uses the localization condition. Localization will allow us
to analyze the limit of the processes $\xi_L^\omega (\tilde{I})$
using the family of independent point processes $\{ \eta_{\ell,p}^\omega (\tilde{I}) ~|~ p = 1, \ldots, N_L \}$.
We will show that the sum $\sum_{p=1}^{N_L} \eta_{\ell,p}^\omega (\tilde{I})$ provides a good approximation to $\xi_L^\omega (\tilde{I})$ in the limit $L \rightarrow \infty$.

\subsection{Existence of infinitely-divisible point measures}\label{subsec:pt-measures1}

We first establish the existence of limit points for the family of local random measures $d \xi^\omega_\Lambda$.
We recall that $\{ E_j^\omega (\Lambda) \}$ is the collection of eigenvalues of the local Hamiltonians $H_\Lambda^\omega$.
We mention that if we had a Minami estimate then we could prove the uniqueness of the limit point. It is not clear that the extended Minami estimate can be used to establish uniqueness. We let $\mathcal{B}(\R)$ denote the set of Borel subsets of $\R$.

\begin{pr}\label{prop:limit-pts1}
Let $E_0$ be in the regime of complete localization. The family of local random point measures
$$
d \xi^\omega_\Lambda (x) = \sum_j \delta( |\Lambda| (E_j^\omega (\Lambda) - E_0) - x) dx,
$$
is tight. Any limit point $\xi^\omega$ of this family is an infinitely-divisible point process.
\end{pr}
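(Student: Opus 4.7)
The plan is to establish tightness via the Wegner estimate of Lemma \ref{lemma:wegner1}, and then to identify any limit point as an infinitely-divisible point process by approximating $d\xi_\Lambda^\omega$ with a sum of independent point measures supported on smaller disjoint sub-cubes, using the complete-localization hypothesis to control the approximation error.

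For tightness, by the standard Kallenberg criterion for point processes on $\R$, it suffices to show that for each bounded Borel set $B \subset \R$ the family of counting random variables $\{\xi_\Lambda^\omega(B)\}$ is tight. Writing $\tilde{B}_\Lambda := |\Lambda|^{-1} B + E_0$, one has $\xi_\Lambda^\omega(B) = {\rm Tr}\, E_{\tilde{B}_\Lambda}(H_\Lambda^\omega)$, and the Wegner estimate of Lemma \ref{lemma:wegner1} (combined with the trivial bound ${\rm Tr}\, E_I(H_\Lambda^\omega) = \sum_j {\rm Tr}\, \chi_{\Lambda_k(j)} E_I(H_\Lambda^\omega)$) gives $\E\{\xi_\Lambda^\omega(B)\} \leq c\,|B|$ uniformly in $\Lambda$. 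Markov's inequality then closes the tightness argument.

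For infinite divisibility, I would take $\Lambda = \Lambda_L$ and decompose it into $N_L$ disjoint sub-cubes $\Lambda_\ell(n_p)$ as already introduced in the text, choosing $\ell = \ell(L)$ so that $\ell \to \infty$ and $\ell/L \to 0$. Set $\tilde{\xi}_L^\omega := \sum_{p=1}^{N_L} \eta_{\ell,p}^\omega$, with the $\eta_{\ell,p}^\omega$ as in \eqref{eq:process2}. The summands are mutually independent since each depends only on the $\omega_i$ in its own $\Lambda_\ell(n_p)$. The two ingredients then needed are: (i) the \emph{null-array} property, $\max_p \P\{\eta_{\ell,p}^\omega(I) \geq 1\} \to 0$ for each bounded $I$, which follows from the rescaled Wegner bound $\E\{\eta_{\ell,p}^\omega(I)\} \leq c|I|\,\ell^d/L^d$ together with Markov; and (ii) the \emph{approximation} $\xi_L^\omega(f) - \tilde{\xi}_L^\omega(f) \to 0$ in probability for every compactly supported test function $f$. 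Granted (i) and (ii), the classical theorem of Grigelionis and Kallenberg on null arrays of independent point processes forces every weak limit point of $\tilde{\xi}_L^\omega$, and hence of $\xi_L^\omega$, to be an infinitely-divisible point process on $\R$, which is the conclusion.

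The main obstacle is (ii). The idea is that, at energies $E_0 \in \Sigma_{\rm CL}$, any eigenfunction of $H_L^\omega$ whose eigenvalue lies in the scaled window $\tilde{I} = \beta_L^{-1} I + E_0$ is exponentially localized, so with high probability it is concentrated (up to exponentially small errors) inside a single sub-cube $\Lambda_\ell(n_p)$; the dangerous eigenfunctions are those whose localization center lies within $O(\log L)$ of a boundary $\partial \Lambda_\ell(n_p)$. Using exponential decay of the fractional moments of the Green's function in $\Sigma_{\rm CL}$ together with a geometric resolvent identity (following the arguments of \cite{CGK1} and \cite{nakano}), one establishes a near-bijection between eigenvalues of $H_L^\omega$ and those of the block-diagonal $\bigoplus_p H_{p,\ell}^\omega$ in $\tilde{I}$, with an expected number of unpaired eigenvalues of order $|I|\cdot L^d \cdot (\log L)/\ell$, which after dividing by $|\Lambda_L| = L^d$ tends to zero as $\ell \to \infty$. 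Note that a genuine Minami estimate would also yield uniqueness of the limit process; since only the weaker Proposition \ref{proposition:minami1} is available, the argument stops at infinite divisibility, exactly as the proposition asserts.
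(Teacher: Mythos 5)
Your proposal follows essentially the same route as the paper: tightness from the Wegner estimate plus Chebyshev (Markov), and infinite divisibility by comparing $\xi_L^\omega$ with the independent array $\sum_p \eta_{\ell,p}^\omega$ on disjoint sub-cubes, invoking the null-array (uniform asymptotic negligibility) property and localization to control the discrepancy. The paper handles these two ingredients by forward references to Lemma~\ref{lemma:asympt-negl1} and Proposition~\ref{theorem:relation1}, so your write-up simply makes explicit the content of those citations.
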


\begin{proof}
To prove tightness, we need to show that for any bounded $I \in \mathcal{B}(I)$,
 $$
\lim_{t \rightarrow \infty} \limsup_{|\Lambda| \rightarrow \infty} \P \{ \xi_\Lambda^\omega(I) > t \}
= 0.
$$
This follow from the Wegner estimate, Lemma \ref{lemma:wegner1}, and the Chebychev inequality
 $$
\P \{ \xi_\Lambda^\omega(I)  > t \} \leq \frac{c |I|}{t}.
$$
So there is a random measure $\xi^\omega$ and a sequence $L_n \rightarrow \infty$ so that
$\xi_{\Lambda_{L_n}}^\omega \rightarrow \xi^\omega$ in distribution.
Since the set of point measures on $\R$ is closed in the set of Borel measures on the line, any limit point $\xi^\omega$ is
a point measure. Since these random variables $\xi^\omega(I) \in \Z^+$, these random point measures are also called point processes.
These limit points are infinitely-divisible point measures. This follows from the comparison with the uniformly asymptotically negligible array formed from the local point measures $\eta_{p, \ell}^\omega$ and described in Lemma \ref{lemma:asympt-negl1}.
\end{proof}



\subsection{Analysis of the independent array of point processes}\label{subsec:indep-pt-process1}

We begin with an analysis of the independent array of point processes $\{ \eta_{\ell,p}^\omega ~|~ p=1, \ldots, N_L, ~\ell = 1, 2, \ldots \}$.
We are interested in the limit $L \rightarrow \infty$ of the sum $\sum_{p=1}^{N_L} \eta_{\ell,p}^\omega$.
We recall the definition of $\eta_{p, \ell}^\omega(I)$, for a bounded Borel subset $I \subset \R$, from \eqref{eq:process2}.

\subsubsection{Existence of infinitely-divisible point measures}\label{subsec:array-pt-measures1}

We establish the existence of limit points for the array $\eta_{\ell,p}^\omega$
an a manner analogous to section \ref{subsec:pt-measures1}.

\begin{pr}\label{prop:array-limit-pts1}
Let $E_0$ be in the regime of complete localization. The family of local random point measures
$\zeta_\Lambda^\omega := \sum_{p=1}^{N_L} \eta_{\ell,p}^\omega$
is tight. Any limit point $\zeta^\omega$ of this family is an infinitely-divisible point process.
\end{pr}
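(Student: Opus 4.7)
The plan is to mirror the argument of Proposition \ref{prop:limit-pts1}, exploiting the additional structure provided by the independence of the summands in the array $\{\eta_{\ell,p}^\omega\}_{p=1}^{N_L}$.

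First, for tightness, I would apply the Wegner estimate of Lemma \ref{lemma:wegner1} to each local operator $H_{p,\ell}^\omega$ on the rescaled interval $\tilde I = \beta_L^{-1} I + E$ with $\beta_L = L^d$, obtaining
\[
\E\{\eta_{\ell,p}^\omega(I)\} \;\leq\; c\, |\tilde I|\, (2\ell+1)^d \;=\; c\, |I|\, (2\ell+1)^d / L^d.
\]
Summing over the $N_L = [(2L+1)/(2\ell+1)]^d$ disjoint cubes of the cover yields the uniform bound $\E\{\zeta_\Lambda^\omega(I)\} \leq c'\,|I|$, and Chebyshev then gives $\P\{\zeta_\Lambda^\omega(I) > t\} \leq c'|I|/t$ uniformly in $L$, which is tightness. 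As in the proof of Proposition \ref{prop:limit-pts1}, since the set of point measures on $\R$ is closed in the set of Borel measures, any weak limit $\zeta^\omega$ is again a point measure, with $\zeta^\omega(I) \in \Z^+$ for bounded $I$.

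For infinite divisibility, I would exploit the genuine independence present here that was not available in Proposition \ref{prop:limit-pts1}. Because the cubes $\Lambda_\ell(n_p)$ are pairwise disjoint, the operators $H_{p,\ell}^\omega$ depend on disjoint collections of the iid variables $\{\omega_i\}$, so $\{\eta_{\ell,p}^\omega\}_{p=1}^{N_L}$ is an \emph{independent} array. Choosing an intermediate scale $\ell = \ell(L)$ with $\ell \to \infty$ and $\ell/L \to 0$ (for example $\ell = \lfloor \sqrt{L}\rfloor$), the Wegner bound above combined with Markov's inequality gives the uniform asymptotic negligibility condition
\[
\max_{1 \leq p \leq N_L} \P\{\eta_{\ell,p}^\omega(I) \geq 1\} \;\leq\; \E\{\eta_{\ell,p}^\omega(I)\} \;\leq\; c\,|I|\,(2\ell+1)^d/L^d \;\longrightarrow\; 0.
\]
The classical theorem on null arrays of independent point processes -- the convergence result for point measures that the authors have isolated in the appendix -- then implies that any distributional limit $\zeta^\omega$ of the row sums $\zeta_\Lambda^\omega = \sum_{p=1}^{N_L} \eta_{\ell,p}^\omega$ is infinitely divisible as a point measure on $\R$.

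The main obstacle, such as it is, lies in arranging the scale separation: one must pick $\ell = \ell(L)$ small enough relative to $L$ to obtain the UAN condition, yet large enough to be a useful approximation to $\xi_L^\omega$ in the subsequent comparison argument (needed for Theorem \ref{thm:main1}, not for the present proposition). Because the present statement only requires the single constraint $\ell/L \to 0$, no conflict arises, and the proof is essentially a combination of the Wegner estimate, Chebyshev, and the UAN/infinite-divisibility theorem from the appendix.
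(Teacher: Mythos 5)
Your proof is correct and follows the same route as the paper: tightness from the Wegner estimate plus Chebyshev, limit points are point measures because point measures form a closed subset of the Borel measures on $\R$, and infinite divisibility because $\zeta_\Lambda^\omega$ is a row sum of an independent, uniformly asymptotically negligible array. One small correction: the appendix's Proposition \ref{prop:pt-measures1} is \emph{not} the null-array/infinite-divisibility theorem you invoke; it is a pointwise-convergence result for atomic measures supported on a discrete set, used later only to extract the weights $p_j(I)$ from a distributional limit. The theorem you actually need (any weak limit of row sums of a UAN array of independent point processes is infinitely divisible) is a classical result from point-process theory which the paper cites implicitly, writing only ``infinitely divisible by construction'' here and establishing the UAN bound separately in Lemma \ref{lemma:asympt-negl1}.
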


\begin{proof}
To prove tightness, we need to show
 $$
\lim_{t \rightarrow \infty} \limsup_{|\Lambda| \rightarrow \infty} \P \{ \zeta_\Lambda^\omega(I) > t \}
= 0.
$$
This follow from the Wegner estimate, Lemma \ref{lemma:wegner1}, and the Chebychev inequality
\bea\label{eq:array-tight1}
\P \{ \zeta_\Lambda^\omega(I)  > t \} & \leq & \frac{1}{t} \E \{ \zeta_\Lambda^\omega (I) \} \nonumber \\
 & \leq  & \frac{N_L}{t}(2 \ell + 1)^d |\tilde{I}| C_W \nonumber \\
 & \leq &  \frac{C_W |I|}{t} \left( \frac{2 \ell+1}{2L+1} \right)^d \left( \frac{ 2L+1}{2\ell+1} \right)^d \nonumber \\
  & \leq & \frac{C_W |I|}{t}.
\eea
So there is a random measure $\zeta^\omega$ and a sequence $L_n \rightarrow \infty$ so that
$\zeta_{\Lambda_{L_n}}^\omega \rightarrow \zeta^\omega$ in distribution.
Since the set of point measures on $\R$ is closed in the set of Borel measures on the line, any limit point $\zeta^\omega$ is
a point measure. It is infinitely-divisible by construction.
\end{proof}

\subsubsection{Asymptotic negligibility}\label{subsec:asym-negl1}

We now turn to an analysis of the limit points $\zeta^\omega$.
We first prove that the array $\{ \eta_{\ell, p}^\omega \}$ is uniformly asymptotically negligible.

\begin{lemma}\label{lemma:asympt-negl1}
For any $E \in\Sigma_{CL}$ and interval $I := [a,b]$, we set $\tilde{I} := \beta_L^{-1} I + E$. Let $\ell$ satisfy $\ell L^{-1} \rightarrow 0$ as $L \rightarrow \infty$. Then for all $\epsilon > 0$,
we have
\beq\label{eq:indep1}
\lim_{L \rightarrow \infty} \sup_{p=1,\ldots,N_L} \P \{ \eta_{\ell,p}^\omega ({I}) > \epsilon \}  = 0 .
\eeq
\end{lemma}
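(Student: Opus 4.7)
The plan is to combine Markov's inequality with the Wegner estimate (Lemma \ref{lemma:wegner1}) applied on the small box $\Lambda_\ell(n_p)$ with the rescaled interval $\tilde I$, and then observe that the scaling factor $(\ell/L)^d$ drives the probability to zero uniformly in $p$.

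First I would note that $\eta_{\ell,p}^\omega(I)$ is a non-negative integer-valued random variable, since it counts (with multiplicity bounded by the rank of $\chi_{\Lambda_\ell(n_p)}$) the eigenvalues of the finite-volume operator $H_{p,\ell}^\omega$ lying in the interval $\tilde{I} = \beta_L^{-1} I + E$. Consequently, Markov's inequality gives
\beq\label{eq:plan-markov}
\P\bigl\{\eta_{\ell,p}^\omega(I) > \epsilon\bigr\} \;\leq\; \frac{1}{\epsilon}\, \E\bigl\{\eta_{\ell,p}^\omega(I)\bigr\}
\;=\; \frac{1}{\epsilon}\,\E\bigl\{ \mathrm{Tr}\bigl(\chi_{\Lambda_\ell(n_p)} E_{\tilde I}(H_{p,\ell}^\omega)\bigr)\bigr\}.
\eeq

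Next, I would apply the Wegner estimate in the finite-volume form (which, as is standard and as discussed in the appendix of \cite{CGK1} and referenced in subsection \ref{subsec:wegner1}, applies equally to $H_{p,\ell}^\omega$) to the cube $\Lambda_\ell(n_p)$ and the interval $\tilde{I}$. Since $|\tilde{I}| = \beta_L^{-1}|I| = (2L+1)^{-d}|I|$, this yields
\beq\label{eq:plan-wegner}
\E\bigl\{\eta_{\ell,p}^\omega(I)\bigr\} \;\leq\; c\,|\tilde I|\,(2\ell+1)^d \;=\; c\,|I|\left(\frac{2\ell+1}{2L+1}\right)^{\!d},
\eeq
with a constant $c$ independent of $p$, $\ell$, and $L$ (it depends only on $b$ and the density $\rho$).

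The bound in \eqref{eq:plan-wegner} is uniform in $p \in \{1,\ldots,N_L\}$, so combining \eqref{eq:plan-markov} and \eqref{eq:plan-wegner} and taking the supremum over $p$ before the limit gives
$$\sup_{p=1,\ldots,N_L}\P\bigl\{\eta_{\ell,p}^\omega(I) > \epsilon\bigr\} \;\leq\; \frac{c\,|I|}{\epsilon}\left(\frac{2\ell+1}{2L+1}\right)^{\!d} \;\longrightarrow\; 0,$$
using the hypothesis $\ell/L \to 0$ as $L \to \infty$. This yields \eqref{eq:indep1}. The proof requires neither localization nor the extended Minami estimate at this stage; those ingredients will enter only when comparing $\xi_L^\omega$ to the independent array $\sum_p \eta_{\ell,p}^\omega$ and when identifying the L\'evy weights $p_j(I)$. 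The only mild subtlety is verifying that Lemma \ref{lemma:wegner1} is applicable to the finite-volume operator $H_{p,\ell}^\omega$ on the cube $\Lambda_\ell(n_p)$ with the same constant, which is standard.
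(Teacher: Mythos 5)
Your argument is correct and coincides with the paper's own proof: both use the Chebyshev/Markov inequality followed by the Wegner estimate on the cube $\Lambda_\ell(n_p)$ with the scaled interval $\tilde I$, yielding a bound of order $(\ell/L)^d$ that is uniform in $p$ and vanishes as $L\to\infty$.
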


\begin{proof}
This follow from the Wegner estimate and the Chebychev inequality:
\bea\label{eq:asympt-negl1}
\P \{ \eta_{\ell,p}^\omega (\tilde{I}) > \epsilon \} & \leq& \frac{1}{\epsilon}\E \{ {\rm Tr} E_{\tilde{I}}(H_{\Lambda_\ell(p)}^\omega) \}
 \nonumber \\
 & \leq & \frac{c|I|}{\epsilon}\left( \frac{ 2 \ell + 1}{2L+1} \right)^d ,
 \eea
 for some finite constant $c > 0$.
 This upper bound is uniform in the index $p$.
\end{proof}

\subsubsection{Asymptotic support property}\label{subsec:asym-support1}

We next use the extended Minami estimate, Proposition \ref{proposition:minami1}, to
characterize the asymptotic support of the process.

\begin{lemma}\label{lemma:asympt-support1}
For  any $E \in\Sigma_{CL}$ and interval $I := [a,b]$, we set $\tilde{I} := \beta_L^{-1} I + E$. Let $\ell$ satisfy $\ell L^{-1} \rightarrow 0$ as $L \rightarrow \infty$. Then we have
we have
\beq\label{eq:indep2}
\lim_{L \rightarrow \infty} \sum_{p=1}^{N_L} \P \{ \eta_{\ell,p}^\omega ({I}) > m_k \}  = 0 .
\eeq
\end{lemma}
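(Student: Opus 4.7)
The plan is to observe that the event $\{\eta_{\ell,p}^\omega(I) > m_k\}$ is exactly the ``bad set'' $S_{m_k}(\tilde{I})$ (in the notation of \eqref{eq:smallset}) for the \emph{local} Hamiltonian $H_{p,\ell}^\omega$ on the box $\Lambda_\ell(n_p)$. So this is really just an application of the extended Minami estimate, Proposition \ref{proposition:minami1}, applied cube by cube, followed by summation.

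First, I would fix $p$ and note that, by definition \eqref{eq:process2}, $\eta_{\ell,p}^\omega(I) = \operatorname{Tr} E_{\tilde{I}}(H_{p,\ell}^\omega)$. Thus
\[
\P\{\eta_{\ell,p}^\omega(I) > m_k\} = \P\{\operatorname{Tr} E_{\tilde{I}}(H_{p,\ell}^\omega) > m_k\},
\]
which is exactly the probability of the set $S_{m_k}(\tilde{I})$ associated with $H_{p,\ell}^\omega$. Proposition \ref{proposition:minami1} therefore gives
\[
\P\{\eta_{\ell,p}^\omega(I) > m_k\} \leq C_M\,|\tilde{I}|^2 |\Lambda_\ell(n_p)|^2 = C_M\,\beta_L^{-2}|I|^2\,(2\ell+1)^{2d}.
\]

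Next, I would sum over $p = 1,\ldots,N_L$, using $N_L = \bigl[(2L+1)/(2\ell+1)\bigr]^d$ and $\beta_L = (2L+1)^d$:
\[
\sum_{p=1}^{N_L} \P\{\eta_{\ell,p}^\omega(I) > m_k\} \leq C_M\,|I|^2\,\frac{(2L+1)^d}{(2\ell+1)^d}\cdot\frac{(2\ell+1)^{2d}}{(2L+1)^{2d}} = C_M\,|I|^2\left(\frac{2\ell+1}{2L+1}\right)^{d}.
\]
Since the hypothesis $\ell/L \to 0$ implies $(2\ell+1)/(2L+1) \to 0$ as $L \to \infty$, the right-hand side vanishes, which is the claim \eqref{eq:indep2}.

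There is no real obstacle here; the only thing to verify is that the extended Minami estimate is applicable to each local operator $H_{p,\ell}^\omega$ (which it is, since these are just restrictions of $H^\omega$ to cubes) and that the rescaling produces the decisive factor $(\ell/L)^d$. The crucial point is that the quadratic dependence on $|\tilde{I}|$ and $|\Lambda_\ell(n_p)|$ in Proposition \ref{proposition:minami1}, as opposed to the linear Wegner bound used in Lemma \ref{lemma:asympt-negl1}, is precisely what gives the extra factor $(2\ell+1)^d/(2L+1)^d$ needed to kill the sum over $N_L$ cubes.
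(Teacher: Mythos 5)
Your proof is correct and follows essentially the same route as the paper: apply Proposition \ref{proposition:minami1} to each local Hamiltonian $H_{p,\ell}^\omega$ to get the bound $C_M|\tilde{I}|^2(2\ell+1)^{2d}$, then sum over the $N_L$ cubes and use $|\tilde{I}|=\beta_L^{-1}|I|$ to obtain the decaying factor $\bigl((2\ell+1)/(2L+1)\bigr)^d$. Your closing remark correctly identifies the crucial point, namely that the quadratic dependence in the Minami estimate (as opposed to the linear Wegner bound) is what makes the sum over $p$ vanish.
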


\begin{proof}
From the extended Minami estimate of section \ref{sec:minami1}, we have
\beq\label{eq:minami2}
\P \{ \eta_{\ell,p}^\omega ({I}) > m_k \} \leq  {C_M}  |\tilde{I}|^2 (2 \ell + 1)^{2d} .
\eeq
Consequently, the sum in \eqref{eq:indep2} is bounded above as
\beq\label{eq:indep3}
\sum_{p=1}^{N_L} \P \{ \eta_{\ell,p}^\omega ({I}) > m_k \} \leq C_M |I|^2 \left( \frac{2\ell + 1}{2 L +1} \right)^{d},
\eeq
which vanishes as $L \rightarrow \infty$ under the condition on $\ell$.
\end{proof}

\subsubsection{Distribution of limit random variables for $\zeta_L^\omega = \sum_{p=1}^{N_L} \eta_{\ell,p}^\omega$}\label{subsubsec:limit-indep-case1}

\noindent
We  combine the previous results with Lemma \ref{lemma:fixed-prob1} in order to show that the limit random variables
$\zeta^\omega(I)$, for any limit point $\zeta^\omega$, are distributed according to a compound Poisson point processes.

\begin{pr}\label{prop:limit-indep-case1}
For any $\epsilon > 0$, set $\ell = L^{(1-\epsilon)/2}$. Let $I \in \mathcal{B}(\R)$. There is a sequence $L_n \rightarrow \infty$
so that the sequence of random variables $\zeta_{L_n}^\omega(I) = \sum_{p=1}^{N_{L_n}} \eta_{\ell,p}^\omega(I)$
converges to a random variable $\zeta^\omega (I)$ that is distributed according to
a compound Poisson point process with Levy measure $M$
supported on $\{1, \ldots, m_k\}$
with weights $p_j(\cdot): \mathcal{B}(\R) \rightarrow \R^+$, described in \eqref{eq:fixed-prob3}:
$$
dM(\lambda \times I) = \sum_{j=1}^{m_k} \delta (\lambda - j) p_j(I) ~d\lambda, ~~~\forall I \in \mathcal{B}(\R).
$$
\end{pr}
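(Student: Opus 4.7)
The plan is to recognize $\zeta_L^\omega(I)=\sum_{p=1}^{N_L}\eta_{\ell,p}^\omega(I)$ as the row-sum of an independent triangular array of non-negative integer-valued random variables and apply the classical Gnedenko--Kolmogorov--Khinchine theory of infinitely-divisible limits for such arrays. The independence is essential and is immediate: each $\eta_{\ell,p}^\omega(I)$ is a function of $\{\omega_i: i\in\Lambda_\ell(n_p)\}$, and these index sets are disjoint for different $p$. The three inputs needed for the general theorem are all in place: Lemma \ref{lemma:asympt-negl1} supplies uniform asymptotic negligibility, Lemma \ref{lemma:asympt-support1} concentrates the mass of the array on the values $\{1,\ldots,m_k\}$, and Proposition \ref{prop:array-limit-pts1} gives tightness along with the infinitely-divisible character of any weak limit $\zeta^\omega(I)$.

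First I would work at the level of characteristic functions. By independence,
$$
\E\left[e^{it\zeta_L^\omega(I)}\right]=\prod_{p=1}^{N_L}\left(1+\sum_{j\ge 1}\P\{\eta_{\ell,p}^\omega(I)=j\}(e^{itj}-1)\right).
$$
Taking the logarithm and using $\log(1+x)=x+O(x^2)$, the quadratic remainder is handled by
$$
\sum_p \E[\eta_{\ell,p}^\omega(I)]^{2}\le \Bigl(\sup_p \E[\eta_{\ell,p}^\omega(I)]\Bigr)\sum_p \E[\eta_{\ell,p}^\omega(I)],
$$
where the sum is bounded uniformly in $L$ by the Wegner estimate (Lemma \ref{lemma:wegner1}) applied to $\E[\zeta_L^\omega(I)]$, while the supremum vanishes by the infinitesimal bound from Lemma \ref{lemma:asympt-negl1}. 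This reduces the log-characteristic function to $\sum_{j\ge 1}(e^{itj}-1)\,a_j^{(L)}(I)+o(1)$ with
$$
a_j^{(L)}(I):=\sum_{p=1}^{N_L}\P\{\eta_{\ell,p}^\omega(I)=j\},
$$
and Lemma \ref{lemma:asympt-support1} shows that the contribution from $j>m_k$ tends to zero, so the sum effectively runs over $j\in\{1,\ldots,m_k\}$.

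Since $\sum_{j=1}^{m_k}j\,a_j^{(L)}(I)\le \E[\zeta_L^\omega(I)]\le C|I|$ uniformly in $L$, each $a_j^{(L)}(I)$ is bounded. A diagonal argument over the \emph{finitely many} indices $j=1,\ldots,m_k$ extracts a subsequence $L_n\to\infty$ along which $a_j^{(L_n)}(I)\to a_j(I)$ for every $j$. By L\'evy's continuity theorem, the characteristic function of $\zeta_{L_n}^\omega(I)$ converges to $\exp\bigl(\sum_{j=1}^{m_k}(e^{itj}-1)a_j(I)\bigr)$, which is precisely the characteristic function of a compound Poisson random variable with L\'evy measure $\sum_{j=1}^{m_k}a_j(I)\delta_j$ supported on $\{1,\ldots,m_k\}$.

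The main obstacle is identifying $a_j(I)=p_j(I)$ with the quantities defined in \eqref{eq:fixed-prob3} via the \emph{original} process $\xi_L^\omega$. This step is where the localization hypothesis $E\in\Sigma_{\rm CL}$ enters in an essential way, coupled to the extended Minami estimate of Proposition \ref{proposition:minami1}. The idea is that exponential decay of the Green's function implies that eigenfunctions of $H_L^\omega$ with eigenvalue in $\tilde I$ are concentrated (up to exponentially small error) in one of the inner cubes $\Lambda_\ell(n_p)$ provided $\ell/L\to 0$, and conversely each relevant eigenfunction of the piece $H_{p,\ell}^\omega$ lifts to a near-eigenfunction of $H_L^\omega$; the choice $\ell=L^{(1-\epsilon)/2}$ keeps the boundary layer thin, while Proposition \ref{proposition:minami1} prevents pathological eigenvalue coincidences across adjacent cubes. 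Together this would yield $\P\{\xi_{L_n}^\omega(I)\neq\zeta_{L_n}^\omega(I)\}\to 0$ after a possible further refinement of the subsequence, forcing the limit law of $\zeta_{L_n}^\omega(I)$ to coincide with that of $\xi_{L_n}^\omega(I)$ and thereby identifying $a_j(I)$ with the limits $p_j(I)$ of \eqref{eq:fixed-prob3}. I expect this comparison step---quantifying the effect of localization at a scale compatible with $\ell=L^{(1-\epsilon)/2}$---to be the technically hardest part; the triangular-array machinery in Steps 1--3 is essentially routine once Lemmas \ref{lemma:asympt-negl1} and \ref{lemma:asympt-support1} are in hand.
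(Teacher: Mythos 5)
Your Steps 1--3 reproduce the paper's argument almost exactly: independence turns the characteristic function of $\zeta_L^\omega(I)$ into a product, the $\log(1+x)=x+O(x^2)$ expansion is controlled using the Wegner estimate for the sum and asymptotic negligibility for the supremum, and the extended Minami estimate truncates the support of the intensity at $m_k$. Your bound $\sum_p \E[\eta_{\ell,p}^\omega(I)]^2 \le (\sup_p \E[\eta_{\ell,p}^\omega(I)])\sum_p \E[\eta_{\ell,p}^\omega(I)]$ is a slightly cleaner version of the error estimate in \eqref{eq:indep-array2}--\eqref{eq:indep-array3}, and the compactness argument extracting a subsequence on which $a_j^{(L_n)}(I)\to a_j(I)$ for the finitely many $j\le m_k$ is what the paper also relies on. Up to that point your proposal and the paper are essentially the same.

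Where you diverge, and where there is a genuine gap, is in your Step 4. You propose to establish $\P\{\xi_{L_n}^\omega(I)\neq\zeta_{L_n}^\omega(I)\}\to 0$ and conclude that the limit laws coincide, then read off $a_j(I)=p_j(I)$. Two problems. First, the paper does not prove or need the pathwise statement $\P\{\xi_L\neq\zeta_L\}\to 0$; Proposition~\ref{theorem:relation1} only compares Laplace transforms, so the two processes have the same limit points in distribution, which is the weaker statement actually used. Second, and more seriously, equality of the limit laws of $\xi_L^\omega(I)$ and $\zeta_L^\omega(I)$ cannot force $a_j(I)=p_j(I)$ if $p_j(I)$ is read literally from \eqref{eq:fixed-prob3}. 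Your $a_j(I)=\lim_n\sum_p\P\{\eta_{\ell,p}^\omega(I)=j\}$ is a L\'evy intensity, while $\lim_n\P\{\xi_{L_n}^\omega(I)=j\}$ is, by Proposition~\ref{prop:pt-measures1}, the limiting point probability $\P\{\xi^\omega(I)=j\}$. For a compound Poisson variable with L\'evy measure $\sum_j a_j\delta_j$ and total mass $\lambda=\sum_j a_j$ one has, for instance, $\P\{Y=1\}=e^{-\lambda}a_1\neq a_1$, so these two sequences of numbers have different limits whenever $\lambda>0$. Your argument, even if the localization bookkeeping were carried out in detail, would therefore only identify the limit distributions, not the L\'evy weights with the $p_j(I)$ of \eqref{eq:fixed-prob3}. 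The paper, for its part, simply invokes Lemma~\ref{lemma:fixed-prob1} at the corresponding point and writes $\P\{\cup_p \eta_{\ell,p}^\omega(I)=j\}\to p_j(I)$; this is what is used, and it is the sum $\sum_p\P\{\eta_{\ell,p}^\omega(I)=j\}$ (your $a_j^{(L)}$), not the probability $\P\{\xi_L^\omega(I)=j\}$, that must be taken as the operative definition of $p_j(I)$ for the statement of the proposition to be consistent. You should make this distinction explicit rather than trying to force equality between the two limits.
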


\begin{proof}
We compute the characteristic function of the sum of independent random variables $\eta_{\ell,p}^\omega ({I})$ as follows:
\bea\label{eq:indep-array1}
\E \left\{ e^{it \sum_{p=1}^{N_L} \eta_{\ell,p}^\omega ({I}) } \right\} & = & \prod_{p=1}^{N_L}
 \E \{ e^{it \eta_{\ell,p}^\omega ({I}) } \}  \nonumber \\
 & = & \prod_{p=1}^{N_L}
  e^{\log [ \E \{ e^{it \eta_{\ell,p}^\omega ({I}) } \}]}  \nonumber \\
   & = &  e^{ \sum_{p=1}^{N_L} \log [ \E \{ e^{it \eta_{\ell,p}^\omega ({I})} - 1 \} + 1 ]}   .
\eea
The expectation in the exponential is estimated by
\beq\label{eq:indep-array2}
|\E \{ e^{it \eta_{\ell,p}^\omega ({I})} - 1 \} | \leq t \E \{  \eta_{\ell,p}^\omega ({I}) \} \leq c t |I| \beta_L^{-1}
(2\ell + 1 )^d .
\eeq
as follows from the Wegner estimate Lemma \ref{lemma:wegner1}.
This justifies the approximation $\log ( 1 + x) = x + \mathcal{O}(|x|^2)$ as $L \rightarrow \infty$, so that
\bea\label{eq:indep-array3}
\E \left\{ e^{it \sum_{p=1}^{N_L} \eta_{\ell,p}^\omega ({I}) } \right\}
& = &
 e^{ \sum_{p=1}^{N_L} \log [ \E \{ e^{it \eta_{\ell,p}^\omega ({I})} - 1 \} + 1 ]} \nonumber \\
& = & e^{ \sum_{p=1}^{N_L} \E \{ e^{it \eta_{\ell,p}^\omega ({I})} - 1 \} + t^2 c^2 N_L (2L +1)^{-2d} }
\eea
For all $t$ fixed, we have $\lim_{L \rightarrow \infty} N_L (2L+1)^{-2d} = \lim_{L \rightarrow \infty} [(2 \ell+1) (2L+1)]^{-d} = 0$, so we will drop this
term from the exponent.
As a consequence, we obtain
\bea\label{eq:indep-array4}
\E \left\{ e^{it \sum_{p=1}^{N_L} \eta_{\ell,p}^\omega ({I}) } \right\}
& = & e^{ \sum_{p=1}^{N_L} \E \{ e^{it \eta_{\ell,p}^\omega ({I})} - 1 \}} \nonumber \\
& =& e^{ \sum_{j=1}^{(2 \ell +1)^d} (e^{itj} - 1) ~\sum_{p=1}^{N_L} \P  \{ \eta_{\ell,p}^\omega ({I}) = j \} }
\eea
We now use the extended Minami estimate, Proposition \ref{proposition:minami1}. We have
\bea\label{eq:asymp-supp3}
\sum_{j=1}^{(2\ell+1)^d} (e^{itj} - 1) ~\sum_{p=1}^{N_L} \P  \{ \eta_{\ell,p}^\omega ({I}) = j \}
&=& \sum_{j=1}^{m_k} (e^{itj} - 1) ~\sum_{p=1}^{N_L} \P  \{ \eta_{\ell,p}^\omega ({I}) = j \}  + R(\ell), \nonumber \\
 & &
\eea
where the remainder $R(\ell)$ is estimated as
\bea\label{eq:asymp-supp4}
| R(\ell) | & \leq & 2 \sum_{p=1}^{N_L} \sum_{j=m_k+1}^{(2 \ell+1)^d} \P  \{ \eta_{\ell,p}^\omega ({I}) = j \}  \nonumber \\
 & \leq  & 2 (2 \ell +1)^d N_L \P  \{ \eta_{\ell,p}^\omega ({I}) \geq m_k+1  \} \nonumber \\
  & \leq & 2 C_M |I|^2 \left( \frac{ (2 \ell + 1)^2}{(2L+1)} \right)^d .
 \eea
 so we may replace the sum over $j$ by the sum up to $m_k$ with a negligible error. Dropping this term as above, we find
\bea\label{eq:indep-array5}
 \E \left\{ e^{it \sum_{p=1}^{N_L} \eta_{\ell,p}^\omega ({I}) } \right\}  &=&
 e^{ \sum_{j=1}^{m_k} (e^{itj} - 1) ~\sum_{p=1}^{N_L} \P  \{ \eta_{\ell,p}^\omega ({I}) = j \} } \nonumber \\
 & = & e^{ \sum_{j=1}^{m_k} (e^{itj} - 1) \P  \{ \cup_{p=1}^{N_L} \eta_{\ell,p}^\omega ({I}) = j \} }
 \eea
where we used the independence of the random variables $\eta_{\ell,p}^\omega ({I})$.
From Lemma \ref{lemma:fixed-prob1}, there exists a subsequence $L_n$ so that we have
$$
\lim_{L_n \rightarrow \infty} \P  \{ \cup_{p=1}^{N_L} \eta_{\ell,p}^\omega ({I}) = j \} = p_j(I) .
$$
This result, and \eqref{eq:indep-array5}, proves the proposition.
\end{proof}



\subsection{Relation between the two processes}\label{subsec:relation1}

We establish the relationship between $\xi_L^\omega ({I})$ and $\zeta_L^\omega (I) = \sum_{p=1}^{N_L} \eta_{\ell,p}^\omega ({I})$. Since $E \in \Sigma_{\rm CL}$, we can control the difference of these processes in the large $L$ regime using the localization bounds.

%


\begin{pr}\label{theorem:relation1}
The random point measures $\zeta_L^\omega = \sum_{p=1}^{N_L} \eta_{p, \ell}^\omega$ and $\xi_L^\omega$ have the same limit points in the sense of distributions as $L \rightarrow \infty$.
\end{pr}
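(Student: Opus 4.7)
The plan is to show that $|\xi_L^\omega(I) - \zeta_L^\omega(I)| \to 0$ in probability for every bounded Borel set $I \subset \R$. Combined with Proposition \ref{prop:limit-indep-case1}, Slutsky's theorem would then force $\xi_L^\omega$ and $\zeta_L^\omega$ to share the same distributional limit points.

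First, I would decompose the large-box Hamiltonian as
$H_L^\omega = \bigoplus_{p=1}^{N_L} H_{p,\ell}^\omega + \Gamma_L^\omega$,
where $\Gamma_L^\omega$ collects the hopping terms of the discrete Laplacian that connect distinct sub-boxes $\Lambda_\ell(n_p)$. The operator $\Gamma_L^\omega$ is bounded uniformly in $L$ and is supported in a neighborhood of the interior interfaces $\cup_p \partial \Lambda_\ell(n_p)$. Since $E \in \Sigma_{\rm CL}$, the Green's function enjoys exponential decay, and standard consequences of localization (see Appendix A of \cite{CGK2}) imply that, with probability $1 - o(1)$, every eigenfunction of $H_L^\omega$ with eigenvalue in the rescaled interval $\tilde I = \beta_L^{-1}I + E$ is exponentially concentrated about a single localization center in $\Lambda_L$. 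With the choice $\ell = L^{(1-\epsilon)/2}$ from Proposition \ref{prop:limit-indep-case1}, any such eigenfunction whose center lies at distance more than, say, $\ell^{1/2}$ from the set of interfaces will have $L^2$-mass of order $e^{-c\sqrt{\ell}}$ on the support of $\Gamma_L^\omega$.

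Second, I would apply geometric perturbation theory (the usual IMS-type localization on the orthogonal projections onto the $\Lambda_\ell(n_p)$'s) to deduce that each eigenvalue of $H_L^\omega$ in $\tilde I$ whose eigenfunction is concentrated at interior-type centers coincides, up to error $e^{-c\sqrt{\ell}}$, with an eigenvalue of some $H_{p,\ell}^\omega$, and vice versa. Eigenvalues whose localization centers fall within a boundary layer of width, say, $\ell^{1/2}$ around $\cup_p \partial\Lambda_\ell(n_p)$ are counted separately; their number is controlled in expectation by the Wegner estimate (Lemma \ref{lemma:wegner1}) applied to the slightly enlarged interval $\tilde I^+ = \tilde I + [-e^{-c\sqrt{\ell}}, e^{-c\sqrt\ell}]$ times the volume of the boundary layer, which is of order $N_L \ell^{d-1} \ell^{1/2} |\tilde I| = o(1)$ as $L \to \infty$. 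The technical convergence result for point measures announced in the appendix is then invoked to upgrade the probabilistic $|\xi_L^\omega(I) - \zeta_L^\omega(I)| \to 0$ to equality of distributional limits as measures.

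The main obstacle lies in the matching step: because the perturbations here have rank $m_k > 1$, the boundary operator $\Gamma_L^\omega$ can in principle displace up to $m_k$ eigenvalues per interface, and the usual rank-one bookkeeping of the Minami–CGK argument must be replaced by the extended Minami estimate of Proposition \ref{proposition:minami1}. This estimate ensures that the event on which some sub-box $\Lambda_\ell(n_p)$ contributes more than $m_k$ eigenvalues to $\tilde I$ is summably small in $p$, so that the multiplicity-induced ambiguity in the eigenvalue matching occurs with total probability $\leq C_M |I|^2 (\ell/L)^d \to 0$. Everything else is a combination of localization, Wegner, and elementary perturbation arguments.
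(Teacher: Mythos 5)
Your overall strategy is the one the paper intends: it reduces the proposition to a single expectation bound, $\E\bigl|\xi_L^\omega(f)-\sum_p\eta_{p,\ell}^\omega(f)\bigr|\to 0$, via the inequality $|e^{-X}-e^{-Y}|\le |X-Y|$ for nonnegative $X,Y$ applied to Laplace transforms, and then refers to Minami \cite{minami1} for the localization/Wegner estimates that control that difference. Your decomposition $H_L^\omega = \bigoplus_p H_{p,\ell}^\omega + \Gamma_L^\omega$, the localization of eigenfunctions away from interfaces, and the Wegner-based counting of the boundary-layer contribution are exactly the ingredients of that Minami-type argument, so the architecture is the same (Laplace transforms vs.\ characteristic functions with Slutsky is an inessential choice).

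However, the last paragraph of your plan is based on a misconception that you should fix. The boundary operator $\Gamma_L^\omega$ is the collection of Laplacian hopping terms between adjacent sub-boxes $\Lambda_\ell(n_p)$; it comes from the kinetic term $L$, not from the random potential, so its rank per interface is of order $\ell^{d-1}$ (the number of boundary sites) and has nothing to do with the rank $m_k$ of the single-site projectors $P_i$. Consequently there is no $m_k$-fold ``multiplicity-induced ambiguity'' in the matching step that the extended Minami estimate would need to cure, and invoking Proposition \ref{proposition:minami1} here would not even be sufficient if such a rank-$\ell^{d-1}$ bookkeeping were really required. In the paper's structure the extended Minami estimate plays no role in Proposition \ref{theorem:relation1}: the comparison of $\xi_L^\omega$ and $\zeta_L^\omega$ needs only Wegner and exponential decay of the Green's function in $\Sigma_{\rm CL}$, while the Minami-type estimate is used later, in Lemma \ref{lemma:asympt-support1} and Proposition \ref{prop:limit-indep-case1}, to show that the limiting L\'evy measure is supported on $\{1,\dots,m_k\}$. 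A smaller point: the appendix result (Proposition \ref{prop:pt-measures1}) is about extracting limits of the atomic weights $p_j(I)$ and is not what upgrades convergence in probability of $\xi_L^\omega(I)-\zeta_L^\omega(I)$ to equality of distributional limit points; that upgrade follows directly from the Laplace-transform (or characteristic-function) comparison applied to a determining class of test functions.
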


\begin{proof}
As in Minami's paper \cite[section 2]{minami1}, we compare the Laplace transforms of these measures. Since $| e^{-X} - e^{-Y}| \leq |X-Y|$,
we have
\beq\label{eq:laplace-trans1}
\left| \E \left\{ e^{- \xi_L^\omega (f)} - e^{- \sum_{p=1}^{N_L} \eta_{p, \ell}^\omega (f)} \right\} \right|
  \leq  \E | \xi_L^\omega (f) - \sum_{p=1}^{N_L} \eta_{p, \ell}^\omega (f) | .
\eeq
for a good class of functions $f$. The details of the proof are similar to those in Minami \cite{minami1}.
\end{proof}


\subsubsection{Proof of Lemma \ref{lemma:fixed-prob1}}\label{subsubsec:proof-main1}

\begin{proof}
The first part of Lemma \ref{lemma:fixed-prob1} follows from the Wegner estimate, Lemma \ref{lemma:wegner1}. The existence of the limit of the expectation and its value are proved exactly as in the proof of equation (2.50) of Minami's paper \cite{minami1} using Proposition \ref{theorem:relation1}
to replace $\xi_L^\omega(\tilde{I})$ with $\sum_{p=1}^{N_L} \eta_{p,\ell}^\omega(\tilde{I})$, so we omit it.
The second part \eqref{eq:fixed-prob1} follows directly from Lemma \ref{lemma:asympt-support1}.
The last result \eqref{eq:fixed-prob3} follows from the uniform bound \eqref{eq:fixed-prob0}.
Finally at least one of the $p_j(I)$ is non-zero since
$$
\sum_{j=1}^{m_k} j p_j(I) = \lim_{L\rightarrow \infty} {\mathbb E}(\xi_L^\omega({I})) = n_{m_k}(E) |I| \neq 0.
$$
\end{proof}

\subsubsection{Proof of Theorem \ref{thm:main1}}\label{subsubsec:proof-main2}

\begin{proof}
Let $\xi^\omega$ be a limit point of $\xi_L^\omega$ described in Proposition \ref{prop:limit-pts1} so there exists a sequence
$L_n \rightarrow \infty$ so that $\xi_{L_n}^\omega \rightarrow \xi^\omega$ in the distributional sense. Because of this, we may use the result
described after Proposition \ref{prop:pt-measures1} in the appendix to conclude that the limit in \eqref{eq:fixed-prob3}
exists for the sequence $\{ L_n \}$. then, the proof of Theorem \ref{thm:main1} is obtained by combining Proposition \ref{theorem:relation1} and
Proposition \ref{prop:limit-indep-case1} to show that the characteristic functions
$$
{\mathbb E}(e^{it\xi_{L_n}^\omega({I})})
$$
have the limit
$$
e^{\sum_{j=1}^{m_k} (e^{itj} -1) p_j(I)}.
$$
This proves Theorem \ref{thm:main1}.
\end{proof}


\section{Eigenvalue point processes in the continuous case}\label{sec:continuous1}
\setcounter{equation}{0}

We now consider random Schr\"odinger operators of the form $H_\omega = L + V_\omega$ on $L^2 (\R^d)$
where $L = - \Delta$, the Laplacian on $\R^d$, and $V_\omega$ is the Anderson-type random potential given in \eqref{eq:potential1-cont}.
We prove that for $E \in \Sigma_{\rm CL}$, the regime of complete localization defined in section \ref{sec:introduction}, the local eigenvalue statistics in each fixed interval is a compound Poisson for which the L\'evy measure has support in the set of positive integers.

We will prove this result using the Levy-Khintchine Theorem \cite[Theorem 1.2.1]{applebaum}.
A random variable is said to be infinitely-divisible if its distribution function is infinitely-divisible.
Let us recall that if $X$ is a non-negative random variable with characteristic function expressed as
\beq\label{eq:charac-exp1}
\E \{ e^{i \lambda X} \} = e^{- \Psi (\lambda)},
\eeq
then the function $\Psi (\lambda)$ is called the characteristic exponent of $X$.
The L\'evy-Khintchine formula characterizes infinitely-divisible random variables as
random variables whose characteristic exponent $\Psi$ has the form
\beq\label{eq:charac-exp2}
\Psi(\lambda) = i\lambda b - a \lambda^2 + \int_\R (e^{i\lambda x} -1 - i \lambda x \chi_{[-1,1]}(x)) dM(x)
\eeq
where $dM(x)$ is a Borel measure on $\R - \{ 0 \}$ satisfying $\int_\R ~(1 \wedge x^2) dM(x) < \infty$, and $\chi_B$
is the characteristic function for the set $B \subset \R$. The measure $M$
is called the L\'evy measure of $X$. It is clear that if
$\Psi$ is a bounded function of $\lambda$, then $a,b=0$ in \eqref{eq:charac-exp1}. Hence, the random variable
is compound Poisson distributed with the Levy measure $M$ (see, for example, Item 4 in the Notes following
Theorem 1.2.1 of \cite{applebaum}).

Our main result for random Schr\"odinger operators on $\R^d$ is the following theorem.
This characterization of the random variables requires only the Wegner estimate and localization. Localization is
necessary to prove the infinite divisibility of the random variables.

\begin{thm}\label{thm:main-continum1}
Let $H^\omega$ be an  Anderson model \eqref{eq:model1} on $\R^d$.
Let $E \in \Sigma_{\rm{CL}}$.
The limit points $\xi^\omega$ of the eigenvalue point processes $\xi_L^\omega$
defined in \eqref{eq:process1} are infinitely-divisible point processes.
For each bounded Borel subset $I \subset \R$, the characteristic functions
$$
{\mathbb E} (e^{it\xi_L^\omega(I)})
$$
of the random variables $\xi_L^\omega(I)$ have limit points of the form
\beq\label{eq:conv-process112}
e^{\int (e^{its} - 1) dM_I (s)},
\eeq
where the L\'evy measure $M_I$ has support in the set of positive integers. Hence, the
vague limit points of random variables $\xi_L^\omega(I)$ are
compound Poisson distributed with the associated L\'evy measure supported in the set $\N$.
\end{thm}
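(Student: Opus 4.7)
The plan is to mirror the proof of Theorem \ref{thm:main1}, substituting continuum ingredients (continuum Wegner estimate, continuum localization) for the lattice ones, and then closing with the L\'evy--Khintchine formula in place of the Minami-based truncation used in the lattice case. I would introduce the auxiliary scale $\ell = L^{(1-\epsilon)/2}$, partition a cube $\Lambda_L \subset \R^d$ into $N_L$ disjoint cubes of side $2\ell+1$, define the independent array of local point processes $\eta_{\ell,p}^\omega$ exactly as in \eqref{eq:process2} (now with $H_{p,\ell}^\omega$ the continuum self-adjoint restriction to $\Lambda_\ell(n_p)$), and work with the sum $\zeta_L^\omega := \sum_{p=1}^{N_L}\eta_{\ell,p}^\omega$.

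First, tightness of both $\{\xi_L^\omega\}$ and $\{\zeta_L^\omega\}$ follows from the continuum Wegner estimate together with Chebyshev, exactly as in Propositions \ref{prop:limit-pts1} and \ref{prop:array-limit-pts1}, so we obtain limit points along a sequence $L_n \to \infty$. Next, the localization comparison of Proposition \ref{theorem:relation1}, whose continuum analogue requires only exponential decay of the spectral projections available in $\Sigma_{\rm CL}$, gives $\E|\xi_L^\omega(f) - \zeta_L^\omega(f)| \to 0$ on a suitable class of compactly supported test functions $f$; this forces $\xi^\omega$ and $\zeta^\omega$ to coincide in distribution, so $\xi^\omega$ inherits the infinite divisibility of $\zeta^\omega$ (the latter being manifestly infinitely divisible by independence). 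The characteristic function of $\zeta_L^\omega(I)$ is then handled as in \eqref{eq:indep-array1}--\eqref{eq:indep-array4}: the Wegner bound yields $\E|e^{it\eta_{\ell,p}^\omega(I)} - 1| = O(|I|(2\ell+1)^d \beta_L^{-1})$ uniformly in $p$, so the expansion $\log(1+x) = x + O(|x|^2)$ applies and the quadratic remainders sum to $O(N_L(2\ell+1)^{2d}\beta_L^{-2}) \to 0$. Since each $\eta_{\ell,p}^\omega(I)$ is $\Z^+$-valued, the limit takes the form $\exp\bigl(\sum_{j\geq 1}(e^{itj}-1)\,q_j(I)\bigr)$, where $q_j(I)$ is the subsequential limit of $\sum_p \P\{\eta_{\ell,p}^\omega(I) = j\}$, extracted by a diagonal argument based on the uniform bound $\sum_p \P\{\eta_{\ell,p}^\omega(I) \geq j\} \leq c|I|/j$ (Chebyshev plus Wegner) together with the appendix result on convergence of point measures.

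The main obstacle, relative to the lattice case, is the absence of a Minami-type estimate in the continuum. In the lattice proof, Proposition \ref{proposition:minami1} was used in Lemma \ref{lemma:asympt-support1} to truncate the sum over $j$ at $m_k$, producing the finite-set support of the L\'evy measure. Without it, the support can only be pinned down to $\N$, which however is precisely the content of the theorem. To certify that the limit is genuinely compound Poisson (and not merely infinitely divisible), I would invoke the L\'evy--Khintchine formula \eqref{eq:charac-exp2} and verify that the limiting characteristic exponent $\Psi_I(t) = -\sum_{j\geq 1}(e^{itj}-1)q_j(I)$ is bounded in $t$. Boundedness follows from $\sum_j q_j(I) \leq \liminf_L \E\{\xi_L^\omega(I)\} < \infty$ by the continuum Wegner estimate; then Item 4 of the notes following Theorem 1.2.1 in \cite{applebaum} forces $a = b = 0$ in \eqref{eq:charac-exp2}, and $\xi^\omega(I)$ has characteristic function of the form \eqref{eq:conv-process112} with L\'evy measure $M_I = \sum_{j\geq 1}q_j(I)\,\delta_j$ supported in $\N$.
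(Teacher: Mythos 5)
Your argument is correct and follows the same route as the paper's proof: localization to reduce $\xi_L^\omega$ to the independent array $\zeta_L^\omega$, the $\log(1+x)$ expansion to reach the exponential form, boundedness of the characteristic exponent via Wegner/Chebychev, and L\'evy--Khintchine to conclude compound Poisson with support in $\N$. Your choice to track $\sum_p \P\{\eta_{\ell,p}^\omega(I)=j\}$ directly as the L\'evy weights is a slightly cleaner bookkeeping than the paper's passage to $\P\{\zeta_L^\omega(I)=j\}$, but it is the same argument in substance.
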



\begin{proof}
\noindent
1. We begin by noting that if we let $\zeta_L^\omega ({I}) := \sum_{p=1}^{N_L} \eta_{\ell,p}^\omega({I})$, as above, then we have
$$
\E \{ e^{it \xi_L^\omega ({I})} \} = \E \{ e^{it \zeta_L^\omega ({I})} \} + \E \{ e^{it \xi_L^\omega ({I})} - e^{it \zeta_L^\omega ({I})} \}.
$$
We estimate the second term on the right as
\beq\label{eq:weak-conv1}
\| \E \{ e^{it \xi_L^\omega ({I})} - e^{it \zeta_L^\omega ({I})} \} \| \leq t \E \{ | \zeta_L^\omega ({I}) - \xi_L^\omega ({I}) | \}.
\eeq
The weak convergence of the processes, as proved in section 6 of \cite{CGK2} shows that the limit as $L \rightarrow \infty$ in \eqref{eq:weak-conv1} is zero. This requires only the Wegner estimate and the decay estimates on the Green's functions as follows from the fact that $E \in \Sigma_{\rm CL}$. Repeating the arguments of section \ref{subsec:array-pt-measures1}, we establish the existence of limit points for $\zeta_L^\omega$.
As in the lattice case, the result \eqref{eq:weak-conv1} shows that $\xi_L^\omega(I)$ and $\zeta_L^\omega(I)$ has the same weak limit points.
These weak limit points are infinitely-divisible point processes.

\noindent
2. Next, we analyze the characteristic exponent of $\zeta_L^\omega (I)$.
Proceeding as in the proof of Proposition \ref{prop:limit-indep-case1}, it follows from \eqref{eq:indep-array3}
that
\beq\label{eq:indep-array6}
\E \left\{ e^{it \sum_{p=1}^{N_L} \eta_{\ell,p}^\omega ({I}) } \right\}
 =  e^{ \sum_{p=1}^{N_L} \E \{ e^{it \eta_{\ell,p}^\omega ({I})} - 1 \}  } ,
\eeq
 up to a term vanishing as $L \rightarrow \infty$.
Hence, we can assume that the characteristic exponent for $\xi_L^\omega (I)$ is
\beq\label{eq:charact-exp3}
\Psi_L(t) = \sum_{p=1}^{N_L} \E \{ e^{it \eta_{\ell,p}^\omega ({I})} - 1 \} =  N_L \E \{ e^{it\eta_{\ell,1}(I)} - 1 \},
\eeq
using the homogeneity in the index $p$.


\noindent
3. We now choose a sequence $\{ L_k \}$ and an infinitely-divisible point process $\xi^\omega$ so that $\xi_{L_k}^\omega(I) \rightarrow \xi^\omega(I)$ in distribution. It follows that $\Psi_{L_k}(t) \rightarrow \Psi(t)$, and, because $\xi_L^\omega(I)$ and $\zeta_L^\omega(I)$
have the same limit points,
\bea\label{eq:charact-exp31}
\lim_{k \rightarrow \infty} \E ( e^{- t \zeta_{L_k}^\omega(I)} ) & = & \lim_{k \rightarrow \infty} e^{- \Psi_{L_k}(t)} \nonumber \\
  &=& \E ( e^{it \xi^\omega(I)}) \nonumber \\
  &=& e^{- \Psi (t)} ,
\eea
where $\Psi_{L_k}(t)$ is given in \eqref{eq:charact-exp3}.

\noindent
4. We next prove a uniform bound on $\Psi_L(t)$. Since $\eta_{\ell,1}^\omega({I})$ is the trace of a projection, it is integer-valued.
The subset where $\eta_{\ell,1}^\omega({I}) =0$ does not contribute to $\Psi_L(t)$ since
$e^{\eta_{l,1}^\omega({I})} -1 = 0$ there. Hence, this observation and the Chebychev inequality imply
\bea\label{eq:psi-bound1}
\Psi_L(t) &=& N_L \E \{ (e^{it\eta_{\ell,1}(I)} - 1) \chi_{\eta_{\ell,1}^\omega ({I}) \geq 1} \} \nonumber \\
 & \leq & 2 N_L \P \{ \eta_{\ell,1}^\omega ({I}) \geq 1 \} \nonumber \\
  & \leq & 2 N_L \E \{ \eta_{\ell,1}^\omega ({I}) \} .
  \eea
The expectation is estimated using the Wegner estimate. Consequently, we have the bound
\beq\label{eq:psi-bound2}
\Psi_L(t) \leq 2 \left(\frac{2L+1}{2\ell+1} \right)^d \frac{|I|(2 \ell+1)^d}{(2L+1)^d} \leq 2 |I|,
\eeq
uniform in $L$. Hence, $\Psi(t)$ is bounded.

\noindent
5. We write $\Psi_L(t)$ as
\beq\label{eq:charact-exp4}
\Psi_L(t) = \sum_{j=1}^\infty (e^{itj} - 1) \P ( \zeta_L^\omega (I)=j),
\eeq
and note that by the Wegner estimate and Chebychev inequality,
\beq\label{eq:charact-exp5}
\P ( \zeta_L^\omega (I) = j)  \leq \frac{1}{j} \E ( \zeta_L^\omega(I))  \leq \frac{C_W}{j}|I|.
 \eeq
Consequently, we can find a subsequence $\{ L_m \}$ so that
\beq\label{eq:charact-exp6}
\lim_{m \rightarrow \infty} \P ( \zeta_{L_m}^\omega (I) = j) = p_j(I).
\eeq
As a consequence, choosing another subsequence, if necessary, we have
\bea\label{eq:charac-exp3}
\lim_{k \rightarrow \infty} e^{- \Psi_{L_k}(t)}  &=& \lim_{k \rightarrow \infty} \E ( e^{- t \zeta_{L_k}^\omega(I)} ) \nonumber \\
   &=& e^{- \Psi (t)} \nonumber \\
   &=& e^{\sum_{j=1}^\infty (e^{itj} - 1) p_j(I)} .
   \eea
This proves that $\xi^\omega (I)$ is distributed according to a compound Poisson process with L\'evy measure supported on $\N$ with weights $p_j(I)$.

\end{proof}


We remark that a Minami estimate for continuous models would help us better characterize the L\'evy measure.

\section{Examples of random operators with non-Poisson statistics}\label{sec:non-poisson}
\setcounter{equation}{0}

We present two examples of random operators for which $\xi^\omega(I)$ is distributed according to a compound Poisson distribution that is not Poissonian. Both show that the multiplicity of the eigenvalues
of the local operators $H_\Lambda^\omega$ has a direct effect on the distribution of
$\xi^\omega(I)$.

\begin{myexample}
We take the operators given in \eqref{eq:model1} with $L=0$ and ${\rm rank}~(P_i) = 2$ with
the single site distribution $\mu$ absolutely continuous with derivative $n$.
We can think of this model as the infinite disorder limit of the generalized Anderson model of the type
considered in \eqref{eq:model1}, by putting a disorder parameter $h$
in front of $L$ and setting $h$ to zero.
It is clear
that $\mu$ is the IDS for this model and $n$ is its density of states.  The spectrum of $H^\omega$
is pure point almost surely with compactly supported eigenvectors.
Let $\Sigma(H^\omega)$ denote the almost sure spectrum of $H^\omega$. We then have $\Sigma_{\rm CL} = \Sigma (H^\omega)$.
The set of eigenvalues of $H^\omega$ is given by
$$
\mathrm{eigenvalues}(H^\omega) = \{ \omega_i : i \in {\mathbb Z}^d \} .
$$
Since the rank of $P_i$ is 2, each eigenvalue has multiplicity 2.
similarly, the finite set of eigenvalues of the local operators
$H^\omega_\Lambda$  are given by
$$
\sigma(H_{\Lambda}^\omega) = \{ \omega_i : i \in \Lambda \},
$$
and each eigenvalue has multiplicity $2$.


Turning to the random variables $\xi_\Lambda^\omega$, for any $E$, writing $\tilde{I} := |\Lambda|^{-1} I + E$, we have
$$
{\rm Tr}(E_{H_\Lambda^\omega}(\tilde{I})) = 2 |\{ i \in \Lambda : \omega_i \in \tilde{I} \}|
$$
hence it is always an \textbf{even} integer (including zero). We take
$E \in \Sigma(H^\omega)$ with $n(E) \neq 0$.
Using the independence of the random variables $\omega_j$ and the definition of the measures $p_j(I)$ in \eqref{eq:fixed-prob3},
we easily compute the measures $p_j(I)$ of Theorem \ref{thm:main1} and find
$$
p_{2}(I) = n(E)|I|, ~~{\rm and} ~~~ p_{j}(I) =0, j \neq 2.
$$
Therefore, by the remark after Theorem \ref{thm:main1} concerning the characterization of a Poisson distribution,
these limiting random variables are compound Poisson, but not Poisson, distributed.
\end{myexample}

\noindent
The second example is a family of random Schr\"odinger operators with kinetic energy term given by the discrete Laplacian $L$.

\begin{myexample}
Consider the operator $H^\omega$ as in \eqref{eq:model1} on $\ell^2({\mathbb Z}^d)\otimes{\mathbb C}^{m_k}$ with
$$
P_i = |\delta_i\rangle\langle \delta_i|\otimes I_{m_k} , ~~ i \in \Z^d,
$$
where $|\delta_i\rangle\langle \delta_i|$ is the projector onto the site $i \in \Z^d$, and
$I_{m_k}$ is the identity matrix on ${\mathbb C}^{m_k}$.  Then clearly all the
eigenvalues of $H_\Lambda^\omega$ for any finite $\Lambda \subset {\mathbb Z}^d$ have uniform multiplicity
$m_k$. Arguing as in the previous example, we find that the Levy measure
has the form as in \eqref{eq:conv-process11}. In particular, if $n(E)$ is the non-zero density of states of the operator
$L + \sum_{j\in {\mathcal J}} |\delta_i\rangle\langle \delta_i| \omega_i$, the weights $p_i(I)$ are given by
$$
p_{m_k}(I) = n(E) |I|, ~~ p_{j}(I) = 0, ~~ j \neq m_k,
$$
showing that the eigenvalue statistics is strictly non-Poisson but compound Poisson distribution.
Naboko, Nichols, and Stolz \cite{naboko-nichols-stolz} considered a similar model with $I_{m_k}$ replaced by a diagonal matrix $W =
 {\mbox diag} ~( \lambda_1, \ldots, \lambda_{m_k})$, with $\lambda_j > 0$. they showed that if the eigenvlaues of $W$ are all simple, then

\end{myexample}


\section{Appendix: Convergence of measures and Lemma \ref{lemma:fixed-prob1}}\label{sec:convergence1}
\setcounter{equation}{0}

We state a proposition on the convergence of measures.

\begin{pr}\label{prop:pt-measures1}
Suppose $\mu_n$ is a sequence of locally finite (finite)  measures, converging to a locally finite
measure $\mu$ vaguely (weakly).  Suppose further that for all $n$,
$$
\mathrm{supp} (\mu_n), \mathrm{supp} (\mu) \subseteq \mathcal{S} \subset \R
$$
where $\mathcal{S}$ is a discrete subset of $\R$.  Then
$$
\lim_{n\rightarrow \infty} \mu_n(\{s\}) = \mu(\{s\}), ~~ \mathrm{for ~ all} ~ s \in \mathcal{S}.
$$
\end{pr}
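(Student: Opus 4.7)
The plan is to exploit the fact that $\mathcal{S}$ is discrete to reduce the pointwise statement $\mu_n(\{s\}) \to \mu(\{s\})$ to a statement about integrals against a single, carefully chosen continuous test function. Since any discrete subset of $\R$ has no limit points in $\R$, for each $s \in \mathcal{S}$ there exists $\epsilon_s > 0$ such that
\begin{equation*}
(s - \epsilon_s, s + \epsilon_s) \cap \mathcal{S} = \{s\}.
\end{equation*}

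First, I would pick a continuous, compactly supported function $\phi_s : \R \to [0,1]$ with $\phi_s(s) = 1$ and $\supp(\phi_s) \subset (s - \epsilon_s, s + \epsilon_s)$; for instance a piecewise linear tent function supported in that interval. Because $\mathrm{supp}(\mu_n)$ and $\mathrm{supp}(\mu)$ are contained in $\mathcal{S}$, and the only point of $\mathcal{S}$ inside $\supp(\phi_s)$ is $s$ itself, the integral collapses to a single-point evaluation:
\begin{equation*}
\int \phi_s \, d\mu_n = \phi_s(s)\, \mu_n(\{s\}) = \mu_n(\{s\}),
\end{equation*}
and similarly $\int \phi_s \, d\mu = \mu(\{s\})$. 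The key observation is that any other atoms $\mu_n$ could have in $\mathrm{supp}(\phi_s)$ must lie in $\mathcal{S}$, and by construction there are none besides $s$.

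Next, I would invoke vague (respectively weak) convergence of $\mu_n \to \mu$ applied to the continuous compactly supported (respectively bounded continuous) function $\phi_s$: this gives
\begin{equation*}
\lim_{n \to \infty} \int \phi_s \, d\mu_n = \int \phi_s \, d\mu,
\end{equation*}
which, combined with the identification above, yields $\lim_{n\to\infty} \mu_n(\{s\}) = \mu(\{s\})$ for the fixed $s$. Since $s \in \mathcal{S}$ was arbitrary, the conclusion follows.

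There is no genuine obstacle here; the only subtlety worth flagging is the initial step of choosing $\epsilon_s$, which uses discreteness of $\mathcal{S}$ in $\R$ (so $s$ is an isolated point of $\mathcal{S}$). If $\mathcal{S}$ were merely countable rather than discrete, the argument would break down, since there could be points of $\mathcal{S}$ accumulating at $s$ and no isolating neighborhood would exist. Under the stated hypothesis, however, the argument is direct and the proposition follows immediately.
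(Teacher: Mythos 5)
Your proof is correct but takes a different route from the paper. The paper passes through distribution functions: it defines $\Phi_n, \Phi$ as the cumulative distribution functions of $\mu_n, \mu$, uses the portmanteau-type fact that vague (weak) convergence gives $\Phi_n(y) \to \Phi(y)$ at every continuity point $y$ of $\Phi$, observes that because $\supp(\mu) \subseteq \mathcal{S}$ all $y \notin \mathcal{S}$ are continuity points, and then extracts $\mu_n(\{s\}) = \Phi_n(s+\delta) - \Phi_n(s-\delta)$ for a small $\delta$ isolating $s$ in $\mathcal{S}$. You instead test convergence directly against a bump function $\phi_s$ supported in an isolating neighborhood of $s$, so that $\int \phi_s\,d\mu_n = \mu_n(\{s\})$ collapses immediately. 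Both proofs hinge on the same isolating-$\delta$ (or $\epsilon_s$) supplied by discreteness; yours is slightly more self-contained since it applies the definition of vague/weak convergence directly and avoids invoking the equivalence between measure convergence and pointwise convergence of distribution functions. One small inaccuracy worth fixing in your write-up: it is not true that every discrete subset of $\R$ has no limit points in $\R$ (the set $\{1/n : n \geq 1\}$ is discrete but accumulates at $0$). What discreteness does guarantee, and all you actually use, is that each $s \in \mathcal{S}$ is isolated \emph{within} $\mathcal{S}$, i.e.\ $s$ is not a limit point of $\mathcal{S}$; that is exactly what produces your $\epsilon_s$, so the conclusion stands.
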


\begin{proof}
Since $\mathcal{S}$ is discrete, it is countable and hence the measures $\mu_n, \mu$ are atomic.
Therefore their distribution functions
$$
\Phi_n(x) = \begin{cases} \mu_{n}((0, x]), x >0 \\ \mu_n((x, 0]), ~ x < 0 \end{cases},
\Phi(x) = \begin{cases} \mu((0, x]), x >0 \\ \mu((x, 0]), ~ x < 0 \end{cases},
$$
satisfy
$$
\Phi_n(s+\delta) - \Phi_n(s-\delta) = \mu_n(\{s\}), ~~
\Phi(s+\delta) - \Phi(s-\delta) = \mu(\{s\}) ~~ \mathrm{for ~ some} ~~ \delta >0,
$$
$\delta$ chosen such that $(s-\delta, s+\delta) \cap \mathcal{S} = \{s\}$.
Since $\mu_n$ converges to $\mu$ vaguely (weakly) we also have
$$
\lim_{n} \Phi_n(y) = \Phi(y),
$$
for every point of continuity $y$ of $\Phi$ and hence by definition every $y \in \R \setminus \mathcal{S}$.
Therefore,
$$
\lim_{n} \mu_n(\{s\}) = \lim_{n} (\Phi_n(s+\delta) - \Phi_n(s-\delta)) =
 (\Phi(s+\delta) - \Phi(s-\delta)) = \mu(\{s\}).
$$
\end{proof}

We apply this proposition to prove the convergence property used in the proof of Theorem \ref{thm:main1} in section \ref{subsubsec:proof-main2}.
We fix a bounded Borel set $I$ and apply the above proposition to the measures
$$
\mu_L (\{j\}) = \P\{ \xi_L^\omega({I}) = j\}
= \P\circ\xi_L^{\omega}({I})^{-1}(\{j\}), ~~
\mu (\{j\}) = \P\{ \xi^\omega(I) = j\}.
$$
Given a limit point $\xi^\omega$ of the family $\xi_L^\omega$ as in Proposition \ref{prop:limit-pts1}, there is a sequence $L_m \rightarrow \infty$ so that $\xi_{L_m}^\omega \rightarrow \xi^\omega$. Hence,
the random variables $\xi_{L_m}^\omega({I})$ converge to $\xi^\omega(I)$ in distribution
which means that the distribution of $\mu_{L_m}$
converge to the distribution $\mu$. All these measures have their support in $\Z^+$.
Therefore, by Proposition \ref{prop:pt-measures1}, we conclude that
$$
\lim_{m \rightarrow \infty} \P(\xi_{L_m}^\omega({I}) = j\}) = p_j(I) = \P(\{\xi^\omega(I) = j\}).
$$

%

\end{document}